\let\emptyset\varnothing
\newcommand{\etal}{et al.}
\newcommand{\cloneclaim}[2]{\medskip\noindent\textbf{#1.} \emph{#2}\smallskip}
\author{
Arthur van Goethem\inst{1}
\and Irina Kostitsyna\inst{1}
\and Marc van Kreveld\inst{3}
\and Wouter Meulemans\inst{1}
\and Max Sondag\inst{1}
\and Jules Wulms\inst{1}
}
\authorrunning{A. van Goethem et al.}
\institute{
Dept. of Mathematics and Computer Science, TU Eindhoven\\ \email{\{ a.i.v.goethem | i.kostitsyna | w.meulemans\linebreak | m.f.m.sondag | j.j.h.m.wulms \}@tue.nl}
\and Dept. of Information and Computing Sciences, Utrecht University \email{m.j.vankreveld@uu.nl}
}
\title{
The Painter's Problem: \newline
covering a grid with colored connected polygons%
\thanks{
This work was initiated at the 2nd Workshop on Applied Geometric Algorithms (AGA 2017) supported by the Netherlands Organisation for Scientific Research (NWO), 639.023.208.
AvG is supported by NWO 612.001.102; IK by FRS-FNRS; MvK by NWO 612.001.651; WM and JW by NLeSC 027.015.G02; MS by NWO 639.023.208.
}
}
\titlerunning{The Painter's Problem: covering a grid with colored connected polygons}
\begin{document}

\maketitle

\begin{abstract}
Motivated by a new way of visualizing hypergraphs, we study the following problem.
Consider a rectangular grid and a set of colors $\chi$.
Each cell $s$ in the grid is assigned a subset of colors $\chi_s \subseteq \chi$ and should be partitioned such that for each color $c\in \chi_s$ at least one piece in the cell is identified with $c$. Cells assigned the empty color set remain white.
We focus on the case where $\chi = \{\text{red},\text{blue}\}$.
Is it possible to partition each cell in the grid such that the unions of the resulting red and blue pieces form two connected polygons?
We analyze the combinatorial properties and derive a necessary and sufficient condition for such a \emph{painting}.
We show that if a painting exists, there exists a painting with bounded complexity per cell.
This painting has at most five colored pieces per cell if the grid contains white cells, and at most two colored pieces per cell if it does not.
\end{abstract}

\section{Introduction}
\label{sec:introduction}
Hypergraphs are a powerful structure to represent unordered set systems.
In general, there are a number of elements (vertices of the hypergraph) and a number of different subsets over these elements (the hyperedges of the graph).
The purpose of visualizing hypergraphs is to clarify the various set relations between the hyperedges.
There are, roughly speaking, two strands of hypergraph visualizations: those where the position of the elements is fixed (e.g.~\cite{Alper2011,Collins2009,Dinkla2012,Meulemans2013}), and those where the positions can be chosen by the layout algorithm (e.g.~\cite{Riche2010,Simonetto2008,Simonetto2009}).
For a more detailed overview and in-depth classification of set visualization methods we refer to the survey by Alsallakh~\etal~\cite{Alsallakh2016}.
Though some methods aim to overcome layout complexity by replicating elements (e.g.~\cite{Alsallakh2013,Riche2010}), we focus on a visualization using a single representation for each element.

In theoretic research on drawing hypergraphs (e.g.~\cite{Buchin2011,Kaufmann2009}), the (often implicit) assumption is that the representations of two sets may cross at common vertices.
Such crossings are not deemed problematic as most visual encodings rely on the local \emph{nesting} of intersecting polygons (in line with the prototypical Venn and Euler diagrams~\cite{Baron1969} and similar visual overlays \cite{Collins2009,Dinkla2012,Meulemans2013}) to identify set memberships.
Nesting, however, gives a strong visual cue of containment and may result in misleading visual representations implying containment relationships between hyperedges.
A rendering style without nesting is one suggested for Kelp Diagrams~\cite{Dinkla2012}.
However, its cluttered appearance caused it not to feature in the later extension, KelpFusion \cite{Meulemans2013}.

One of the most well-established quality criteria of graph drawings is planarity (see e.g. \cite{Purchase2002,Purchase1996}).
When nested encodings are used, a planar drawing relates to finding a planar support~\cite{Buchin2011}: a planar (regular) graph such that the vertices of each hyperedge induce a connected subgraph in the support.
Deciding whether a planar support exists is possible for some simple support classes (see~\cite{Buchin2011} for a discussion), but is already NP-hard for 2-outerplanar support graphs~\cite{Buchin2011}.
Optimizing hypergraph supports for total graph length without planarity constraints is
NP-hard, but approximation algorithms exist \cite{Akitaya2016,Hurtado2013}.

Representations that do not require nesting are edge-based drawings~\cite{Makinen1990} or the equivalent Zykov representation~\cite{Walsh1975}, for which notions of planarity follow readily from the standard notion for regular graphs.

\begin{wrapfigure}[13]{r}{0.33\linewidth}
  \centering
  \vspace{-0.8\intextsep}
  \includegraphics{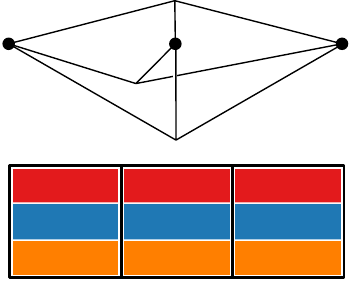}
  \caption{A hypergraph that is not Zykov-planar (top) but has a disjoint-polygons drawing (bottom).}
  \label{fig:zykovvsdisoint}
\end{wrapfigure}

Instead we suggest a visual design that uses \emph{disjoint polygons} to present hyperedges: vertices are represented as simple geometric primitives (e.g. a square or circle); hyperedges are represented as connected polygons that overlaps only and all its incident vertices; and all such polygons are pairwise disjoint.
As illustrated in Fig.~\ref{fig:zykovvsdisoint}, our disjoint-polygons encoding is stronger as it can visualize some hypergraphs that are not Zykov-planar, whereas any Zykov-planar hypergraph admits a disjoint-polygons representation.
We can use vertices to ``pass in between'' the representations of other hyperedges, though not as flexibly as is allowed for planar supports: the polygons must remain disjoint.

\paragraph{Contributions}
We investigate the properties of drawing hypergraphs using disjoint polygons.
Motivated by moving towards a set visualization in a geographic small multiples or grid map (see e.g. \cite{Meulemans2017,Wood2008}), we specifically study the variant where each element has a fixed location, being a cell in a rectangular grid.
As an initial exploration we focus on the 2-color case, where each cell is either red, blue, both (purple), or uncolored (white).
We thus aim to partition each purple cell into red and blue pieces, such that the resulting pieces of a single color form a connected polygon.
We derive a necessary and sufficient condition to efficiently recognize whether an instance is solvable.
For solvable instances, we bound the number of colored pieces within each cell by a small constant and show that these bounds are tight.
Due to space constraints, some proofs have been shortened or omitted; for full proofs, please refer to the appendix.

\section{Preliminaries}
\label{sec:prelims}

We define a \emph{$k$-colored grid} $\Gamma$ as a rectangular grid, in which each cell $s$ has a set of associated colors $\chi_s \subseteq \{1,\ldots,k\}$.
A \emph{fully} $k$-colored grid is the case where $\chi_s \neq \emptyset$ for all cells $s$.
Throughout this paper, we primarily investigate $2$-colored grids and use \emph{colored grid} to refer to the 2-colored case, unless indicated otherwise.
We refer to the two colors as ($r$)\emph{ed} and ($b$)\emph{lue}; cells for which $\chi_s = \{r,b\}$ are called ($p$)\emph{urple}.
Cells with no associated colors are \emph{white}.

A region is a maximal set of cells that have the same color assignment ($r$, $b$, or $p$) and where every cell $s$ in the region is connected via adjacent cells to every other cell $s'$ in the region. Cells are considered adjacent if they are horizontally or vertically adjacent.

A \emph{panel} $\pi_s$ for cell $s$ (with $\chi_s \neq \emptyset$) maps each color $c \in \chi_s$ to a (possibly disconnected) area $\pi_s(c)$ such that these partition the cell: that is, $\bigcup_{c \in \chi_s} \pi_s(c) = s$ and $\pi_s(c_1) \cap \pi_s(c_2) = \emptyset$ for colors $c_1 \neq c_2$.
A \emph{painting} $\Pi$ of a $k$-colored grid consists of panels $\pi_s$ for each cell $s$ with $\pi_s(c)\not=\emptyset$ for each $c \in \chi_s$ and $\pi_s(c) = \emptyset$ otherwise.
We call a painting \emph{connected} if each color forms a connected polygon: that is, $\bigcup_{s\in \Gamma} \pi_s(c)$ is a connected polygon for each color $c \in \{1,\ldots,k\}$.
For this definition, two cells sharing only a corner are \emph{not} considered connected.
Our primary interest is in connected paintings: in the remainder, we use painting to indicate a connected painting.

\section{Characterizing Colored Grids with a Painting}
\label{sec:testing}

In this section we show how to test whether a 2-colored grid admits a painting and how to find a painting if one exists.
As all completely red, blue, and white panels are fixed, finding a painting reduces to finding partitions of purple cells that ensure that the resulting red and blue polygon are connected.
We show that this connectivity is of key importance: if we can find suitable connections though the purple regions, then we can also create a partition that results in a valid panel for each cell in the purple regions.

We capture the connectivity options for the red and blue polygon using two embedded graphs, $G_r$ and $G_b$.
We construct these graphs in three steps:
\begin{enumerate}
\item Connect red (blue) regions that are adjacent along a purple region's boundary.
\item Remove holes from the purple regions by inserting connections (Section~\ref{ssec:annuli}).
\item Construct $G_r$ and $G_b$ using a gadget for purple regions (Sections~\ref{ssec:tograph} and~\ref{ssec:spiderwebs}).
\end{enumerate}
For the first step, observe that consecutive (not necessarily distinct) adjacent regions of the same color can always be safely connected via the purple region's boundary without restricting the connectivity options for the other color (see Fig.~\ref{fig:basecase-leader}).
After the first two steps, we represent the remaining red (blue) regions as vertices in $G_r$ ($G_b$).
Edges in $G_r$ and $G_b$ represent connection options through purple regions; intersections indicate a choice to connect either blue or red regions through (part of) a purple region.
The gadget for purple regions with many adjacent red and blue regions also requires some additional vertices in these graphs.
We prove that these graphs admit a simple characterization of 2-colored grids that admit a painting, as captured in the theorem below.

\begin{theorem}\label{the:dualPaintingFinal}
A 2-colored grid $\Gamma$ admits a painting if and only if the corresponding graphs $G_r$ and $G_b$ are each other's exact duals: there is exactly one blue vertex in every red face and there is exactly one red vertex in every blue face.
\end{theorem}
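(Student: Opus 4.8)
The plan is to prove both directions by relating the combinatorial duality of $G_r$ and $G_b$ to the geometric existence of two disjoint connected polygons. The key reduction, which I take from the observation already established in this section, is that a painting exists if and only if one can route non-crossing red and blue connections through the (hole-free) purple regions so that all red regions merge into a single component and all blue regions merge into a single component; once such connectivity is achieved, every purple cell can be subdivided into a valid panel. It therefore suffices to show that such a simultaneous non-crossing routing exists exactly when $G_r$ and $G_b$ are mutual duals. Throughout I treat $G_r$ and $G_b$ as plane graphs drawn in the grid, where each potential red connection (edge of $G_r$) crosses the competing potential blue connection (edge of $G_b$) precisely at the choice points produced by the construction, so that their edge sets are already in the incidence pattern of a candidate primal--dual pair.

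For the forward direction I would start from a painting and extract its connection structure: inside each purple region the interface between the red and blue pieces is a system of disjoint curves, and contracting each connected red piece and each connected blue piece yields a realized red subgraph $H_r \subseteq G_r$ and a realized blue subgraph $H_b \subseteq G_b$, each connected and spanning the vertices of its color. Because the red and blue polygons are disjoint and together cover the purple regions, $H_r$ and $H_b$ are non-crossing and complementary. I would then apply the Jordan curve theorem together with Euler's formula to the connected plane graph $H_r$: each bounded face encloses a part of the grid that, by the interleaving of regions along purple boundaries, must contain blue material, and since blue is connected a face can contain at most one blue component; hence there is exactly one blue vertex per red face, and the symmetric argument on $H_b$ gives exactly one red vertex per blue face. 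Finally, since $H_r$ and $H_b$ are spanning and the construction records no routing option outside $G_r$ and $G_b$, these face--vertex bijections extend to $G_r$ and $G_b$ themselves, so they are exact duals.

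For the backward direction I would assume $G_r$ and $G_b$ are exact duals. Exact duality in particular forces both graphs to be connected, so I may invoke the standard tree--cotree decomposition of a connected plane graph and its dual: fixing a spanning tree $T_r$ of $G_r$, the edges of $G_b$ dual to $E(G_r)\setminus E(T_r)$ form a spanning tree $T_b$ of $G_b$. I would realize $T_r$ as actual red connections and $T_b$ as actual blue connections; because each crossing choice point is resolved in favor of exactly one color, the two realized systems are disjoint and each spans its color, so red and blue each become a single connected polygon. Applying the per-cell panel construction then yields the painting.

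The hard part, I expect, will be the bookkeeping that makes the abstract duality match the geometry exactly. Two issues need care. First, the gadget (spiderweb) vertices of $G_r$ and $G_b$ are not genuine regions, so I must verify that the ``exactly one'' counts are preserved once these auxiliary vertices are included, i.e. that the construction's edge-crossing pattern really is the primal--dual incidence of a single plane-graph pair rather than merely resembling one. Second, the geometric realization step must turn each selected tree edge into a concrete non-crossing curve inside a purple region and then thicken the whole system into connected polygons whose per-cell restriction is a valid panel of bounded complexity; this is where I anticipate the most delicate case analysis, since a single purple region may carry many red and blue connections at once and the curves must be nested consistently with the chosen trees.
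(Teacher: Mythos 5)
There is a genuine gap: your proposal only proves the hole-free case, which is Lemma~\ref{the:dualPainting2}, not Theorem~\ref{the:dualPaintingFinal}. The theorem covers arbitrary 2-colored grids, in which purple regions may have holes (annuli), and in the paper the graphs $G_r$ and $G_b$ are only defined after a hole-removal step (Step~2 of the construction). The substantive work that separates the theorem from Lemma~\ref{the:dualPainting2} is to justify that step: Section~\ref{ssec:annuli} shows that if any structure of non-crossing connections through an annulus extends to a painting, then there is one with exactly one red and one blue cross-annulus connection (Lemma~\ref{lem:alternatingColors}, Lemma~\ref{lem:remove2}, Corollary~\ref{cor:oneRedoneBlue}), and moreover that the endpoints of these two connections can be moved to any prescribed inner and outer regions (Lemma~\ref{lem:move}). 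Only this allows an annulus to be ``cut open,'' reducing the number of holes by one without changing solvability; the theorem then follows from this reduction together with Lemma~\ref{the:dualPainting2}. By writing ``(hole-free) purple regions'' into your key reduction you have assumed away exactly this part: for a grid with an annular purple region, the graphs you reason about are not even defined until the annulus has been cut, and the cut must be proven harmless in both directions of the equivalence.

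Within the hole-free case, your outline is close to the paper's: the backward direction via the tree--cotree theorem (Lemma~\ref{lem:cotree}) is precisely the paper's argument, and your forward direction (extract connected spanning realized subgraphs, then apply Euler's formula to get one blue vertex per red face) is a face-local variant of the paper's global edge count ($e \geq n_r + n_b - 2$ from connectivity versus $e \leq n_r + n_b - 2$ from Euler's formula, forcing $n_b = f_r$ and $n_r = f_b$). However, you also defer rather than resolve the second substantive ingredient: that every connection pattern a painting realizes through a purple region with $\kappa(P) > 4$ can be reproduced by non-crossing paths inside the spiderweb gadget, which is what legitimizes your claim that the realized subgraphs $H_r$ and $H_b$ sit inside $G_r$ and $G_b$ with the correct crossing pattern. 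You correctly flag this as ``the hard part'' but give no argument; in the paper it is the case analysis of bridging paths (Lemma~\ref{lem:nonintersecting}). So of the three pillars supporting the theorem---the hole-free duality argument, gadget faithfulness, and hole removal---your proposal carries out only the first.
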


For explanatory reasons we start with the simplest case: purple regions with at most four neighbors and without holes (Section~\ref{ssec:tograph}).
Subsequently, we alleviate the assumption on the number of neighbors (Section~\ref{ssec:spiderwebs}) and permit holes in the purple regions, by showing how to perform Step 2 (Section~\ref{ssec:annuli}).

\subsection{Simple Purple Regions}
\label{ssec:tograph}

We assume that Step 1 has been performed and a purple region has no holes and at most four adjacent regions.
The adjacent red and blue regions of a purple region $P$ form an ordered cyclic list as they appear along the boundary of $P$ and alternate in color (due to Step 1).
Let $\kappa(P)$ denote the length of this list for $P$.
$\kappa$ is even due to color alternation, and by assumption here $\kappa(P) \leq 4$.
There can be duplicates in this list as the same red or blue region can touch $P$ multiple times.

\begin{figure}[t]
\begin{minipage}[t]{.37\textwidth}
	\centering
	\includegraphics[page=1]{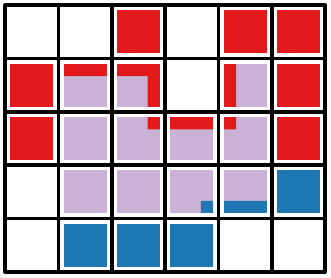}
	\caption{Safe connections between adjacent same-color regions.}
    \label{fig:basecase-leader}
\end{minipage}
\hfill
\begin{minipage}[t]{.59\textwidth}
	\centering
	\includegraphics[page=2]{base-case}
	\caption{2-colored grid with 4 regions around each purple region and corresponding graphs $G_r$ and $G_b$.}
    \label{fig:basecase-graphex}
\end{minipage}
\end{figure}

Every purple region with $\kappa(P)=2$ can be painted by creating a spanning tree on the centers of the panels of $P$ in one color and connecting it to the corresponding region. The rest of the panels is colored in the other color.
We assume these are handled; what remains is to deal with the regions with $\kappa(P) = 4$.

For a purple region $P$ with $\kappa(P) = 4$, we create a red edge in $G_r$ and a blue edge in $G_b$ that intersect: the red edge connects the red vertices corresponding to the adjacent red regions; the blue edge connects the corresponding blue vertices. There may be multiple edges between two vertices (see Fig.~\ref{fig:basecase-graphex}). If the same red or blue region touches the purple region twice, the edge is a self-loop.
Every red or blue edge intersects exactly one blue or red edge respectively, and $G_r$ and $G_b$ are plane by construction.
Using the following lemma we prove the exact characterization of graphs $G_r$ and $G_b$ of a 2-colored grid $\Gamma$ that admits a painting.

\begin{lemma}[\cite{Eppstein1992,Tutte1984}]\label{lem:cotree}
Let $G$ be a plane graph, $G^*$ its dual and $T$ a spanning tree of $G$.
Then $T^* = \{ e^* \; | \; e \not\in T \}$ is a spanning tree of $G^*$.
\end{lemma}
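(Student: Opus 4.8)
The plan is to prove this combinatorially via the cycle--cut duality of plane graphs together with a cardinality count from Euler's formula. Assume $G$ is connected (the disconnected case reduces to components, where duality is standard), and write $n=|V(G)|$, $m=|E(G)|$, and $f$ for the number of faces. Recall that $G^*$ has $f$ vertices, that its edges are in bijection $e \leftrightarrow e^*$ with those of $G$ (so $m$ edges), and that its faces correspond to the vertices of $G$. Since $T$ is a spanning tree it has $n-1$ edges, so $T^*$ has $m-(n-1)=m-n+1$ edges; by Euler's formula $n-m+f=2$ this equals $f-1$, which is exactly $|V(G^*)|-1$. Hence $T^*$ has precisely the right number of edges to be a spanning tree of $G^*$, and it suffices to prove that $T^*$ is acyclic, since a forest on $f$ vertices with $f-1$ edges is automatically connected and therefore a spanning tree.

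The key geometric ingredient is the correspondence between cycles in one graph and minimal edge cuts (bonds) in its dual. First I would establish that a set of edges $C\subseteq E(G^*)$ forms a simple cycle in $G^*$ if and only if the corresponding primal edges $\{e : e^*\in C\}$ form a bond of $G$. This follows from the Jordan curve theorem applied to the embedding: a simple closed curve traced by a cycle of $G^*$ partitions the faces of $G^*$ --- equivalently, the vertices of $G$ --- into an ``inside'' set $A$ and an ``outside'' set $B$, both nonempty; a primal edge $e$ is crossed by this curve (that is, $e^*\in C$) exactly when its two endpoints lie on opposite sides, so $\{e : e^*\in C\}$ is precisely the set of $A$--$B$ edges, a minimal cut of $G$.

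With this in hand, acyclicity of $T^*$ is a short contradiction. Suppose $T^*$ contained a cycle $C$ in $G^*$. By the duality above, the primal edges dual to $C$ form a bond separating $V(G)$ into nonempty parts $A$ and $B$. But every edge of $C$ lies in $T^*$, which by definition means that none of these primal edges lies in $T$. Then $T$ contains no edge between $A$ and $B$, so $T$ is disconnected, contradicting that $T$ is a spanning tree of $G$. Therefore $T^*$ is a forest, and by the edge count it is a spanning tree of $G^*$.

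The main obstacle is the cycle--cut duality step: it is the only place where the planar embedding (and hence the Jordan curve theorem) is genuinely used, and it must be stated carefully so that ``cycle in $G^*$'' corresponds to ``bond in $G$'' in the correct direction. Everything else is bookkeeping with Euler's formula plus a one-line connectivity argument. An alternative is to phrase the statement through matroid duality --- a spanning tree is a basis of the graphic matroid $M(G)$, whose dual for planar $G$ is $M(G^*)$, so the complement of a basis of $M(G)$ is a basis of $M(G^*)$ --- but this merely repackages the same cycle--cut duality behind heavier machinery, so I would favor the direct topological argument above.
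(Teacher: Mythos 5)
Your proof is correct, but note that the paper itself never proves this lemma: it is imported from the literature (cited to Eppstein and to Tutte) as a known fact, the classical tree--cotree duality for plane graphs, so there is no in-paper argument to compare against. Your argument is essentially the standard textbook proof of that classical result: count edges via Euler's formula to see that $T^*$ has exactly $|V(G^*)|-1$ edges, then rule out cycles in $T^*$ via cycle--cut duality, so that acyclicity plus the edge count forces a spanning tree. The reasoning is sound, including the reduction to acyclicity (a forest on $f$ vertices with $f-1$ edges has one component) and the Jordan-curve step. One small remark: you do not actually need the primal edge set dual to a cycle of $G^*$ to be a \emph{bond} (minimal cut); your own contradiction only uses that it is the complete set of edges between two nonempty vertex classes $A$ and $B$, so you could drop the minimality claim --- which you never verify (it would require $G[A]$ and $G[B]$ to be connected) --- and the proof becomes both shorter and fully rigorous. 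It is also worth checking, as a sanity test your argument passes, the degenerate multigraph cases that arise for duals of grids of regions: a bridge of $G$ dualizes to a loop of $G^*$, and since a bridge lies in every spanning tree $T$, its dual loop never enters $T^*$, consistent with $T^*$ being a tree.
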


\begin{lemma}\label{the:dualPainting}
A 2-colored grid $\Gamma$ in which each purple region $P$ has no holes and $\kappa(P)\leq 4$, admits a painting if and only if the corresponding graphs $G_r$ and $G_b$ are each other's exact duals.
\end{lemma}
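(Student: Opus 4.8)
The plan is to reduce the ``exact duals'' condition to ordinary planar duality, set up a dictionary between connected paintings and complementary spanning trees, and then let Lemma~\ref{lem:cotree} do the main work. First I would reformulate the hypothesis. Since by construction every red edge crosses exactly one blue edge and vice versa, the red and blue edges are in bijection, and a blue edge joins the blue vertices lying in the two faces of $G_r$ incident to the crossed red edge. Hence ``exactly one blue vertex in every red face and exactly one red vertex in every blue face'' is equivalent to saying that $G_b$ is a plane embedding of the planar dual $G_r^*$, with the edge bijection given by the crossings; in particular both graphs are then connected. I would record this equivalence first, so the rest of the argument speaks purely about a plane graph and its dual.

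Next I would build the correspondence between paintings and edge selections. In a purple region $P$ with $\kappa(P)=4$ the neighbours alternate $r,b,r,b$, so one can route a bar of a single colour joining the two like-coloured neighbours, but one cannot route both colours disjointly; thus each such region realizes either its red edge or its blue edge, never both. Writing $T\subseteq E(G_r)$ for the regions where red is routed, the realized blue connections are exactly $\{e^*\mid e\notin T\}$. The red polygon is connected precisely when $(V_r,T)$ is connected, and the blue polygon precisely when $(V_b,\{e^*\mid e\notin T\})$ is connected. I would also check that the leftover pieces inside each region attach to a neighbouring region of the opposite colour, so no stray component is created, and that the $\kappa(P)=2$ regions (already paintable) cause no trouble.

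With this dictionary, the two directions follow. For sufficiency, assuming exact duality I take any spanning tree $T$ of the connected graph $G_r$; Lemma~\ref{lem:cotree} gives that $\{e^*\mid e\notin T\}$ is a spanning tree of $G_b=G_r^*$, so routing red through the regions of $T$ and blue through the rest yields two connected polygons, i.e.\ a painting. For necessity, a painting gives a selection $T$ with $(V_r,T)$ and $(V_b,\{e^*\mid e\notin T\})$ both connected; since $T\subseteq E(G_r)$ this forces $G_r$ and (symmetrically) $G_b$ to be connected, and from $|T|\ge n_r-1$ and $m-|T|\ge n_b-1$ one obtains $n_r+n_b\le m+2$. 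Reading this through the crossing bijection should then upgrade the inequality to the one-to-one correspondence between blue vertices and red faces, i.e.\ exact duality. The main obstacle I anticipate is exactly this last step: combined with Euler's formula $n_r-m+f_r=2$, the counting rules out a red face containing \emph{two} blue vertices (which would give $n_b>f_r$ and hence $n_r+n_b>m+2$), but it does \emph{not} by itself rule out a red face containing \emph{no} blue vertex, where the totals could still match while the distribution is wrong. Excluding that ``empty-face'' defect requires the topological structure of the embedding rather than Euler's formula alone, and making that ``exactly one per face'' argument rigorous is the delicate part of the proof.
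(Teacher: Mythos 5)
Your overall strategy coincides with the paper's: your sufficiency direction (a spanning tree $T$ of the connected graph $G_r$ plus Lemma~\ref{lem:cotree} giving the complementary spanning tree $\{e^*\mid e\notin T\}$ of $G_b$, then drawing both trees and attaching leftover pieces without crossings) is exactly the paper's argument, and your necessity direction uses the same counting ($|T|\ge n_r-1$ and $m-|T|\ge n_b-1$, hence $m\ge n_r+n_b-2$) combined with Euler's formula. However, the step you explicitly leave open is a genuine gap, and it is precisely the ingredient the paper adds to close the argument: by construction of the embedding, every face of $G_r$ contains at least one blue vertex and every face of $G_b$ contains at least one red vertex, i.e.\ $n_b\ge f_r$ and $n_r\ge f_b$. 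Feeding these into Euler's formula for both graphs yields $m\le n_r+n_b-2$; together with your counting inequality this forces $m=n_r+n_b-2$, hence $n_b=f_r$ and $n_r=f_b$, and then ``at least one per face'' plus equality of the totals gives ``exactly one per face'', which is exact duality. Without this per-face lower bound your necessity direction cannot be completed, as you observe yourself.

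The missing fact has a short proof, so the gap was fillable: let $F$ be a face of $G_r$ having an edge $e$ on its boundary. The blue edge $e^*$ crossing $e$ crosses no other red edge (each blue edge crosses exactly one red edge by construction), so the arc of $e^*$ that enters $F$ at the crossing point can never leave $F$; its endpoint, a blue vertex, therefore lies inside $F$. (A face with no edge on its boundary occurs only if $G_r$ is edgeless, a degenerate case to treat separately.) Note also a logical slip in your final paragraph: the counting does \emph{not} by itself rule out a red face containing two blue vertices --- such a face is perfectly consistent with $n_b\le f_r$ when some other face is empty. The doubly-occupied face and the empty face are excluded only simultaneously, by playing the per-face lower bound against the counting upper bound; neither defect can be eliminated by Euler's formula alone.
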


\begin{proof}[sketch]
We prove that if $\Gamma$ admits a painting then graphs $G_r$ and $G_b$ are each other's duals using a counting argument. We count the number of edges needed to connect all red and blue regions, and use Euler's formula to show the number of red faces must be equal to the number of blue vertices, and vice versa.
The other direction follows from Lemma~\ref{lem:cotree}. Having two dual spanning trees (e.g., Fig.~\ref{fig:basecase-graphex}), simply draw the two spanning trees and for any cell not yet having a blue (red) piece add a crossing-free connection to the blue (red) polygon.
\hfill$\qed$
\end{proof}

\subsection{Spiderweb Gadgets}
\label{ssec:spiderwebs}

Let us now extend the result in the previous section, by showing how to include purple regions with more than four adjacent regions.
For every purple region $P$ with $\kappa(P) > 4$ we construct a \emph{spiderweb} gadget and insert it into the graphs $G_r$ and $G_b$, such that an argument similar to Lemma~\ref{the:dualPainting} can be applied.

A spiderweb gadget $W$ of $P$ with $\kappa(P)/2 = k$ red and $k$ blue alternating adjacent regions consists of $\lfloor k/2\rfloor + 1$ levels, labeled $0$ (outermost) to $\lfloor k/2\rfloor$ (innermost), see Fig.~\ref{fig:gadgetindices}.
Each level, except $0$ and $\lfloor k/2\rfloor$, is a cycle of $k$ vertices.
The level $0$ has $k$ (blue) vertices without any edges between them, and the innermost level $\lfloor k/2\rfloor$ consists of only a single vertex.
The vertices of even levels are blue and labeled with even numbers from $0$ to $2k-2$ clockwise.
The vertices of odd levels are red and labeled with odd numbers $1$ to $2k-1$ clockwise.

Each vertex of level $\ell $ with $2 \leq \ell < \lfloor k/2 \rfloor$ is connected to the vertex with the same label on level $\ell -2$. The single vertex of level $\lfloor k/2\rfloor$ is connected to all the vertices of level $\lfloor k/2\rfloor-2$.
This gives us $2k$ paths starting from levels $0$ and $1$ to the two innermost levels. We call these paths \emph{spokes}, and refer to them by the label of the corresponding vertices.
We embed the two resulting connected components in such a way that they are each other's dual, by making sure that we get a proper clockwise numbering on the vertices of the two outermost levels (see Fig.~\ref{fig:gadgetindices}).
The vertices on levels $0$ and $1$ represent respectively the blue and red regions around the purple region $P$ and respect the adjacency order around $P$.

If a blue (or red) region touches $P$ multiple times, then the corresponding vertices on level $0$ (or $1$) map to the same region and are in fact one and the same vertex in $G_b$ (or $G_r$).
All edges connected to this vertex are consistent with the topology of the nested neighboring regions of $P$; they intersect the same edges as they would when they were represented by multiple vertices.

To prove that all possible connections in $P$, which can occur in a painting $\Pi$, can be replicated in a spiderweb gadget $W$, we define \emph{bridging paths}: let $u$ and $v$ be two vertices on level $0$ in $W$ that represent two blue regions that are connected by a painting $\Pi$ through $P$. Assume that the clockwise distance from $u$ to $v$ is not greater than $k$, that is, if $u$ has label $x$ then $v$ has label $(x+2i)\mod 2k$ for some $1 \leq i \leq \lfloor k/2\rfloor$. 
To connect $u$ and $v$ with a bridging path, we start from $u$, go to level $2\lfloor (i+1)/2 \rfloor$ along the spoke $x$, take a shortest path within the level $2\lfloor (i+1)/2 \rfloor$ from the vertex with label $x$ to the vertex with label $(x+2i)\mod 2k$, and move along the spoke $(x+2i)\mod 2k$ to vertex $v$. If there are two possible shortest paths, we take the clockwise path.

The same kind of path can be constructed for a pair of red vertices, but starting from level $1$, going to level $2\lfloor i/2 \rfloor+1$, and moving back to level $1$.
We now show that connecting different blue and red regions using bridging paths within the spiderweb gadgets results in blue trees and red trees, such that no pair of a blue and a red edge intersect (see Fig.~\ref{fig:gadgetpainting} for an example).

By performing a case analysis on the possible red and blue pairs of adjacent regions to be connected, we can prove that the following lemma holds.
\begin{lemma}\label{lem:nonintersecting}
Consider a painting $\Pi$ in which two blue and two red regions, adjacent to a purple region $P$, are connected through $P$. The corresponding vertices in the spiderweb gadget $W$ of $P$ can be connected by non-intersecting bridging paths.
\end{lemma}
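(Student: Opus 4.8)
The plan is to translate the geometric painting into a purely combinatorial non-crossing condition on the cyclic boundary of $P$, and then verify that the bridging-path construction respects this condition. Label the $2k$ adjacent regions $0,1,\dots,2k-1$ in clockwise boundary order, with even labels blue and odd labels red, matching levels $0$ and $1$ of $W$. In the painting $\Pi$ the blue pieces inside $P$ form one connected area and the red pieces another, and these two areas are disjoint. Hence I can route a curve $\gamma_b$ inside the blue area joining the two connected blue regions and a curve $\gamma_r$ inside the red area joining the two connected red regions; $\gamma_b$ and $\gamma_r$ are disjoint simple curves with endpoints on $\partial P$, so by the Jordan curve theorem their endpoint pairs cannot interleave along $\partial P$. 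Writing the blue endpoints as $x,(x+2i)\bmod 2k$ and the red endpoints as $y,(y+2j)\bmod 2k$ (with $1\le i,j\le\lfloor k/2\rfloor$, as in the construction), this says the two chords are \emph{non-crossing}: they are either \emph{arc-disjoint} or \emph{nested}. This is the only property of $\Pi$ I will use.

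Next I would narrow down which crossings between the two bridging paths are even possible. Each bridging path consists of two spokes and a single arc inside one level; the blue path's arc lies on the even level $L_b=2\lfloor(i+1)/2\rfloor$ and the red path's arc on the odd level $L_r=2\lfloor j/2\rfloor+1$. In the concentric embedding, arcs on different levels lie at different radii, so a blue arc and a red arc can never meet; and distinct spokes occupy distinct angular rays, so spokes of opposite colors never cross. Thus the only potential crossings are between one path's arc and the other path's spokes, and such a crossing occurs exactly when a spoke of the opposite color lies strictly inside the angular span of the arc and climbs to at least the arc's radius.

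With this reduction the claim follows from a short case analysis on the two chords. If the chords are arc-disjoint, each path's spoke endpoints lie outside the angular span of the other path's arc (here the fact that blue spokes sit at even angles and red spokes at odd angles guarantees strict separation, with no degenerate ties), so neither arc is crossed by the opposite spokes and we are done. If the chords are nested---say the red chord lies inside the blue chord---then the blue spokes $x,x+2i$ lie outside the red arc's span, so the red arc is safe; the only remaining danger is the blue arc being crossed by a red spoke, which by the reduction happens only if $L_r\ge L_b$. I would rule this out by showing that the inner (narrower) chord always climbs to a strictly shallower level: nesting forces the inner span to be strictly smaller, and the asymmetric floor expressions---blue targeting even levels via $\lfloor(i+1)/2\rfloor$, red targeting odd levels via $\lfloor j/2\rfloor$---are calibrated precisely so that a narrower chord lands on a strictly smaller level regardless of the colors involved. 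The symmetric argument handles a blue chord nested inside a red chord.

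The main obstacle is this nested case, and specifically the cross-color level inequality: one must check the floor arithmetic for both parities of the spans and for both nesting orientations to confirm that the off-by-one offset between the even and odd level targets never lets the inner path reach the outer arc. The one genuinely delicate configuration is the antipodal chord ($i=\lfloor k/2\rfloor$ with $k$ even), where the chord is a diameter with two equal shortest arcs; here I would invoke the clockwise tie-break to fix a consistent side and verify that the enclosed chord still falls on the intended side, so that the nesting---and hence the level inequality---continues to hold.
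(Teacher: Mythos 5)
Your proposal is correct and follows essentially the same route as the paper's proof: both arguments use the disjointness of the two connections in $\Pi$ (via planarity/Jordan-curve reasoning) to conclude that the blue pair and the red pair of boundary vertices cannot interleave, and then split into the nested case (where the inner chord's smaller index, e.g.\ $j<i$, forces $2\lfloor j/2\rfloor+1 < 2\lfloor (i+1)/2\rfloor$ via the same floor arithmetic) and the angularly separated case (the paper's ``opposite sides of the innermost vertex''). Your explicit reduction to arc-versus-spoke crossings and the antipodal tie-break discussion are slightly more detailed bookkeeping of what the paper leaves implicit, but the decomposition and the key inequalities are identical.
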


With spiderweb gadgets and the above lemma, we now strengthen Lemma~\ref{the:dualPainting} to the following lemma, without a condition on $\kappa$, and prove it in a similar way.

\begin{figure}[t]
\begin{minipage}[t]{.42\textwidth}
	\centering
	\includegraphics[page=7]{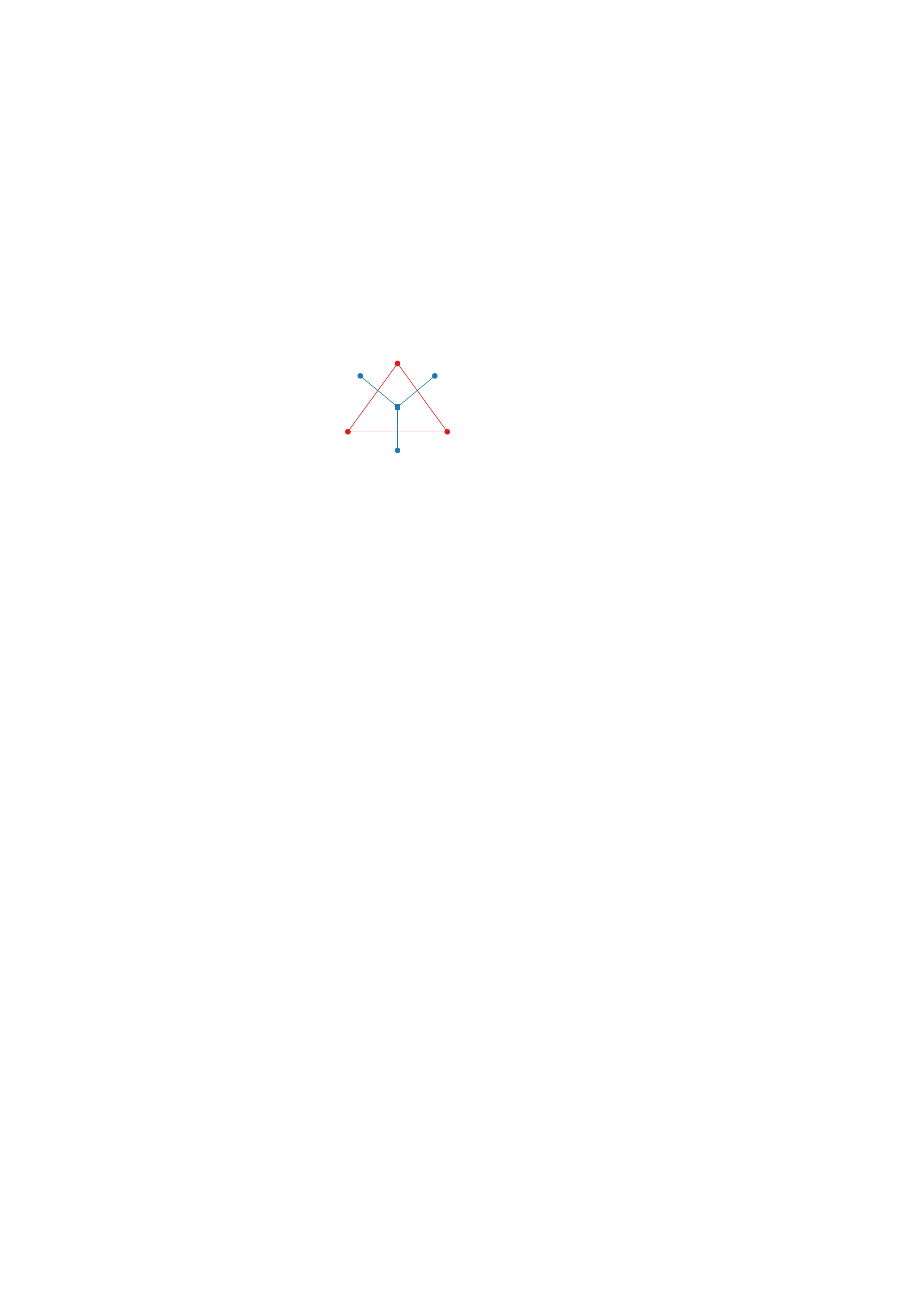}
	\caption{Spiderweb gadget for $k=6$: three blue levels with indices $0$, $2$, $4$, and two red levels with indices $1$, $3$.}
	\label{fig:gadgetindices}
\end{minipage}
\hfill
\begin{minipage}[t]{.54\textwidth}
	\centering
	\includegraphics{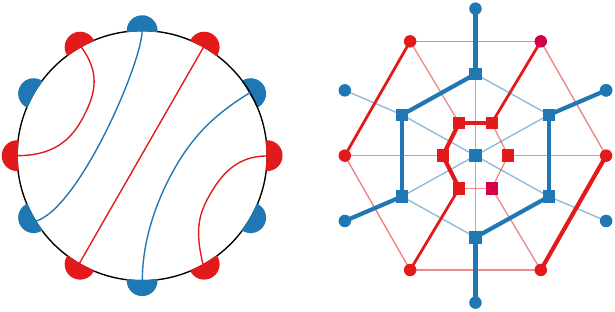}
	\caption{Topology of the connections in a purple region and the corresponding bridging paths through a spiderweb gadget.}
	\label{fig:gadgetpainting}
\end{minipage}
\end{figure}

\begin{lemma}\label{the:dualPainting2}
A 2-colored grid $\Gamma$ in which each purple region has no holes admits a painting if and only if the corresponding $G_r$ and $G_b$ are each other's exact duals.
\end{lemma}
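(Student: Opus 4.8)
The plan is to mirror the two-part structure of the proof of Lemma~\ref{the:dualPainting}, replacing the single crossing red/blue edge pair of a $\kappa(P)=4$ region by the spiderweb gadget of $P$ whenever $\kappa(P)>4$. The key property I would rely on is that each gadget is built to be \emph{self-dual}: its red subgraph and its blue subgraph are embedded as plane duals of one another, and they are glued to the surrounding graph consistently with the cyclic adjacency order of the regions around $P$. Because of this, I expect the global statement ``$G_r$ and $G_b$ are exact duals'' to reduce to the local correctness of each gadget together with the counting argument already used in the simple case.

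For the forward direction I would assume that $\Gamma$ admits a painting $\Pi$. Inside each purple region, $\Pi$ connects some adjacent red regions to each other and some adjacent blue regions to each other; for $\kappa(P)\le 4$ these connections are recorded as before, and for $\kappa(P)>4$ I would invoke Lemma~\ref{lem:nonintersecting} to replace the realized connections by non-intersecting bridging paths inside the gadget, so that no red path crosses a blue path. Collecting these paths over all purple regions, the red connections form a subgraph of $G_r$ that spans all red vertices (since the red polygon of $\Pi$ is connected), the blue connections symmetrically span all blue vertices, and red and blue edges remain pairwise non-crossing. I would then apply Euler's formula to $G_r$ exactly as in Lemma~\ref{the:dualPainting}, using that each gadget contributes a balanced number of vertices, edges, and faces by self-duality, to conclude that the number of red faces equals the number of blue vertices and vice versa, i.e. that $G_r$ and $G_b$ are exact duals.

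For the converse I would assume $G_r$ and $G_b$ are exact duals, so $G_b$ is the plane dual $G_r^{*}$. I would pick any spanning tree $T$ of $G_r$; by Lemma~\ref{lem:cotree} the set $T^{*}$ of blue edges dual to the non-tree red edges is a spanning tree of $G_b$. Since a red edge in $T$ crosses precisely the blue edge dual to it, and that blue edge lies outside $T^{*}$, the trees $T$ and $T^{*}$ are non-crossing. I would then realize $T$ geometrically: every tree edge running through a gadget is drawn along the corresponding spokes and level arcs of the spiderweb, producing actual red connections through the purple regions, and $T^{*}$ is realized symmetrically for blue. Planarity and self-duality of the gadget embedding guarantee that the resulting red and blue polygons are connected and disjoint, after which I would complete each panel as in Lemma~\ref{the:dualPainting}, adding a crossing-free connection for any cell that still lacks a required red or blue piece.

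The hardest part will be making the counting rigorous in the presence of the auxiliary interior vertices and edges of the spiderweb gadgets: I must verify that each gadget's red and blue parts genuinely form plane duals, so that their contribution to Euler's formula cancels and the global duality condition becomes \emph{exactly} equivalent to solvability rather than merely necessary. A secondary, more geometric difficulty lies in the converse direction, namely turning the abstract non-crossing spanning trees, whose edges may traverse several levels and spokes of a gadget, into concrete disjoint connected polygons laid out over the grid cells without introducing crossings.
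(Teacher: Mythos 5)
Your proposal follows essentially the same route as the paper's proof: the forward direction replays the painting's connection topology inside each spiderweb gadget via the bridging paths of Lemma~\ref{lem:nonintersecting} and then reuses the counting/duality argument of Lemma~\ref{the:dualPainting}, while the converse applies Lemma~\ref{lem:cotree} to obtain two non-crossing dual spanning trees and realizes them geometrically through the gadgets. The one point you flag as the hardest---handling the gadgets' auxiliary interior vertices---is resolved in the paper by growing non-intersecting spanning forests from the level-$0$ and level-$1$ vertices after inserting the bridging paths (which create no cycles), so that every gadget vertex is spanned and the earlier argument goes through unchanged.
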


\subsection{Purple Regions with Holes}
\label{ssec:annuli}
We may also have purple regions with holes (see Fig.~\ref{fig:annulusExample}).
We show that the number of holes can be reduced without affecting the solvability.
For simplicity of explanation we assume a region with a single hole (an annulus); regions with more holes can be reduced by considering only connections to the outer boundary.

Let $P$ be a purple annulus.
Any painting subdivides $P$ into a number of colored simple components.
Each component of color $c$ connects one or more regions of color $c$ on the boundary of $P$.
The existence of a painting is defined only by the connectivity structure of these components.
The connectivity of a component can be represented (transitively) using a set of non-intersecting simple paths (\emph{connections}) each connecting two regions on the boundary.
Let a \emph{cross-annulus connection} $\gamma_x$ be a connection between a region $x_{in}$ on the inside of the annulus and a region $x_{out}$ on the outside of the annulus.
A (connectivity) \emph{structure} is a maximal set of (pairwise non-intersecting) connections in $P$ that can be extended to a valid painting.
Let $C_S$ be the set of cross-annulus connections in a given structure $S$.
We assume the annulus is not degenerate, so red and blue regions exist both inside and outside the annulus.

\begin{lemma}
\label{lem:alternatingColors}
If a structure $S$ exists with two adjacent cross-annulus connections $\gamma_x$ and $\gamma_y$ of the same color, possibly separated by non-crossing connections, then there also exists a structure $S'$ where $C_{S'}=C_S\setminus \{\gamma_y\}$.
\end{lemma}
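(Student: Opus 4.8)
The plan is to prove the lemma by a local rerouting that turns $S$ into $S'$: delete the crossing $\gamma_y$ and re-establish the blue connectivity it carried through the neighbouring crossing $\gamma_x$, so that the cross-annulus set shrinks by exactly $\gamma_y$. Assume without loss of generality that $\gamma_x$ and $\gamma_y$ are blue.

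First I would localise the argument. Since $\gamma_x$ and $\gamma_y$ are adjacent cross-annulus connections, together with an arc of the inner and an arc of the outer boundary of $P$ they bound a sub-disk $R$, and by adjacency every connection of $S$ inside $R$ is a same-side connection, joining two regions on the inner arc or two regions on the outer arc. Write $x_{in},y_{in}$ and $x_{out},y_{out}$ for the endpoints of $\gamma_x$ and $\gamma_y$ on the two arcs; all four are blue.

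The rerouting deletes $\gamma_y$ and adds two blue same-side connections inside $R$, one joining $x_{in}$ to $y_{in}$ and one joining $x_{out}$ to $y_{out}$. Blue connectivity is preserved because in the blue connection graph the deleted edge $y_{in}y_{out}$ is replaced by the path $y_{in}\,x_{in}\,x_{out}\,y_{out}$ running over the retained crossing $\gamma_x$, so every pair of blue regions connected in $S$ stays connected. That the two new connections cross nothing I would establish from the fact that the inner-arc and outer-arc same-side connections of $R$ form two mutually non-crossing laminar families anchored on disjoint parts of $\partial R$: a simple arc separates them, and routing $x_{in}$--$y_{in}$ on its inner side (enclosing the inner-arc family) and $x_{out}$--$y_{out}$ on its outer side avoids every old connection and keeps the two new ones apart.

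The step I expect to be the main obstacle is showing that red stays connected, because a blue connection drawn across $R$ can locally separate red pieces from the red regions on an arc. Here I would use two facts. First, every cell of $P$ must carry a piece of both colours, so the arcs of $R$ always bear red regions and no band of purple cells can be enclosed entirely in blue and still be painted; the red pieces that the new inner (outer) connection cuts off therefore still abut red regions on the inner (outer) arc and remain connected. Second, deleting $\gamma_y$ opens a corridor merging $R$ with its neighbour, through which any red piece trapped between the two new connections escapes into the adjacent sector and reaches a red boundary region there. Read in reverse, this same trapping argument fixes the cross-annulus set: reinstating $\gamma_y$ would enclose a blue-bounded band of purple cells whose red pieces could no longer connect out, so $\gamma_y$ is not addable to the reconfigured structure, and since $S$ was maximal no other cross-annulus connection becomes addable either. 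Completing the reconfigured set to a maximal structure thus adds only same-side connections and yields $S'$ with $C_{S'}=C_S\setminus\{\gamma_y\}$.
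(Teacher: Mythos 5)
Your rerouting is literally the paper's: delete $\gamma_y$ and reconnect by adding the two same-side connections $x_{in}$--$y_{in}$ and $x_{out}$--$y_{out}$ inside the sector $R$. What you are missing is the single observation the paper's proof hinges on: because no cross-annulus connection of $S$ lies between $\gamma_x$ and $\gamma_y$, these two connections must already be connected \emph{to each other inside the annulus}, through $R$, in any painting extending $S$ (a connection between them outside the annulus, or through the hole, would cut off the opposite-colored regions in between, which cannot escape across $R$ since $R$ contains no cross-annulus connection). Once this ``implicit connection'' is established, nothing needs to be repainted: the two new connections can be realized as disjoint paths inside the already-existing blue component of the painting $\Pi$ (run along the realization of $\gamma_x$, across the implicit path, and along $\gamma_y$), so the very same painting $\Pi$ that witnesses $S$ also witnesses that $S'$ extends to a valid painting. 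Connectivity of the other color is then simply not in question, because no new blue curve is ever drawn through red territory.

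By skipping this step you are forced to treat the new connections as freshly drawn curves and to re-establish red connectivity by hand, and that substitute argument --- the part you yourself flag as the main obstacle --- has concrete gaps. First, ``every cell of $P$ must carry a piece of both colours, so the arcs of $R$ always bear red regions'' is a non sequitur: the panel requirement constrains the interiors of the purple cells, not which regions abut the boundary of $P$, and an arc of $R$ may carry no region of the other color at all (this is exactly case~(b) in the paper's Fig.~\ref{fig:annulusRemoveSameColorAdjacent}). Second, the corridor claim --- that red trapped between the two new connections escapes past where $\gamma_y$ was and ``reaches a red boundary region there'' --- is asserted, not proved: on the far side of the corridor the escaping red can run into the blue same-side connections anchored at $y_{in}$ and $y_{out}$, and then into the next cross-annulus connection, which may be blue; to show an exposed red region exists in that pocket you would need to invoke the validity of the original painting inside that pocket, which you never do. Your closing maximality argument inherits the same problem: whether $\gamma_y$ becomes re-addable depends on how the new connections may be routed (drawn inside the blue component along the implicit connection, they enclose no purple cells), so the ``blue-bounded band'' you appeal to need not exist. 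The paper's implicit-connection observation is precisely what dissolves all of these difficulties at once.
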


\begin{figure}[t]
\begin{minipage}[t]{.38\textwidth}
	\centering
	\includegraphics{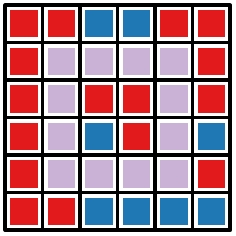}
		\caption{An annulus-type purple region with adjacent blue and red regions, both inside and outside.}
		\label{fig:annulusExample}
\end{minipage}
\hfill
\begin{minipage}[t]{.59\textwidth}
	\centering
	\includegraphics{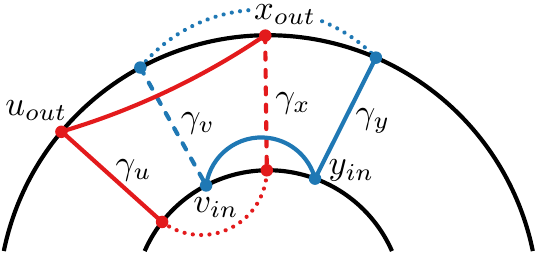}
		\caption{By adding edges $(v_{in},y_{in})$ and $(u_{out},x_{out})$ we reconnect the disconnected subpolygons formed be removing cross-annulus connections $\gamma_v$ and $\gamma_x$.}
		\label{fig:annulusRemoveAdjacent}
\end{minipage}		
\end{figure}

\begin{lemma}
\label{lem:remove2}
If there exists a structure $S$ with $|C_S|>3$ and all cross-annulus connections are alternating in color, then there also exists a structure $S'$ with $|C_S|-2$ cross-annulus connections.
\end{lemma}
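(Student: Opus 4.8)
The plan is to prove the lemma constructively: I would exhibit the structure $S'$ explicitly and then verify that it is again extendable to a painting. First I would fix convenient notation. Since the cross-annulus connections alternate in color and $|C_S| > 3$, their number is even and at least four, so at least two connections of each color occur. Reading the cross-annulus connections in their cyclic order around $P$, I would select four consecutive ones and label them $\gamma_y, \gamma_x, \gamma_v, \gamma_u$, where (up to swapping the two colors) $\gamma_y$ and $\gamma_v$ are blue and $\gamma_x$ and $\gamma_u$ are red. Here $\gamma_x$ and $\gamma_v$ are the two I intend to delete, while the flanking same-color connections $\gamma_y$ (blue) and $\gamma_u$ (red) are retained to carry the through-connectivity of each color.

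Next I would describe the surgery, following Fig.~\ref{fig:annulusRemoveAdjacent}. Deleting $\gamma_x$ and $\gamma_v$ frees their two corridors and merges the three intervening sectors of the annulus into a single disk $\Sigma$ bounded by $\gamma_y$ and $\gamma_u$. Into $\Sigma$ I would insert a blue connection $(v_{in},y_{in})$ routed along the inner boundary through the corridor vacated by $\gamma_x$, and a red connection $(x_{out},u_{out})$ routed along the outer boundary through the corridor vacated by $\gamma_v$. By keeping the blue connection close to the inner boundary and the red one close to the outer boundary, the two new connections avoid each other and avoid $\gamma_y$ and $\gamma_u$, so the collection stays pairwise non-crossing. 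By construction $C_{S'} = C_S \setminus \{\gamma_x,\gamma_v\}$, giving exactly $|C_S|-2$ cross-annulus connections; I would then take $S'$ to be a maximal non-crossing extension of this set.

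It remains to show $S'$ is extendable to a painting, i.e.\ that both color classes stay connected. For blue, the retained crossing $\gamma_y$ links $y_{in}$ to $y_{out}$ and the new inside connection links $v_{in}$ to $y_{in}$, so every blue region that previously reached the outside through $\gamma_v$ now reaches it through $\gamma_y$; the symmetric statement holds for red via $\gamma_u$ and the new outside connection. Repainting $\Sigma$ so that the blue (red) pieces attach to these connections then yields a connected painting, precisely reconnecting the subpolygons severed by the deletions.

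The hard part will be the connectivity bookkeeping for the two \emph{orphaned} endpoints $v_{out}$ (blue) and $x_{in}$ (red), whose crossings are removed. Since any path joining $x_{out}$ to $u_{out}$ inside $\Sigma$ necessarily separates $v_{out}$ from the remainder of $P$'s interior (and symmetrically the new blue connection separates $x_{in}$), I cannot reconnect these endpoints through $P$; I must instead argue they never relied on $\gamma_v$, resp.\ $\gamma_x$, for their connectivity. The key observation is that, because $\gamma_x$ lies between $y_{out}$ and $v_{out}$ along the boundary, non-crossing already forbids any blue connection joining $y_{out}$ to $v_{out}$ through $P$ in $S$; hence in the valid structure $S$ the region $v_{out}$ is tied to the blue polygon either outside $P$ or through the inside via $\gamma_y$, and in both cases the modification preserves this link. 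A short case analysis along these lines—mirroring the same-color merging of Lemma~\ref{lem:alternatingColors}—closes the argument and is where the real care is needed.
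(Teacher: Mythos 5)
There is a genuine gap, and it sits exactly at the step you flag as ``where the real care is needed.'' You commit to the surgery a priori---blue reconnection $(v_{in},y_{in})$ hugging the inner boundary, red reconnection $(x_{out},u_{out})$ hugging the outer boundary---and then must show that the orphaned regions $v_{out}$ and $x_{in}$ never relied on the deleted connections. That claim is false. Your case enumeration for $v_{out}$ (``tied to the blue polygon either outside $P$ or through the inside via $\gamma_y$'') omits the one possibility that matters: that $v_{out}$ reaches the blue polygon \emph{only} via $\gamma_v$ itself, i.e.\ via $\gamma_v$ to $v_{in}$ and then through the hole to $y_{in}$. Nothing in the hypotheses excludes this; a structure in which the exterior blue attaches to the rest of the blue polygon solely through $\gamma_v$ is perfectly valid. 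In that situation your surgery is doubly fatal: deleting $\gamma_v$ severs $v_{out}$, the new blue connection $(v_{in},y_{in})$ does not reach it, and your new red connection $(x_{out},u_{out})$, routed along the outer boundary across the corridor vacated by $\gamma_v$, permanently walls $v_{out}$ off from $y_{out}$.

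The paper's proof avoids this by not choosing sides in advance. It first proves a dichotomy: after deleting the two middle connections, the red orphan component and the blue orphan component cannot lie on the same side of the annulus. (If both were outside, then in $S$ the red connectivity between its two cross-connections would have to run through the inside, joining $u_{in}$ to $x_{in}$, and likewise blue would have to join $v_{in}$ to $y_{in}$ through the inside; since these four endpoints alternate around the hole, the two connections would cross---a contradiction. The both-inside case is symmetric.) Only then does it reconnect, matching each orphan to the retained connection of its own color on whichever side that orphan actually lies (Fig.~\ref{fig:annulusRemoveAdjacent} shows one branch). Your construction is exactly one of the two branches; the mirrored branch needs blue $(v_{out},y_{out})$ outside and red $(x_{in},u_{in})$ inside. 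To repair your proof, replace the final paragraph with this opposite-sides argument and the resulting two-case surgery; the rest of your construction (four consecutive connections, deleting the middle two, routing the new connections along opposite boundaries through the vacated corridors so they avoid each other and the retained connections) matches the paper and is sound.
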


\begin{proof}
Let $\gamma_u$, $\gamma_v$, $\gamma_x$, and $\gamma_y$ be four consecutive cross-annulus connections. W.l.o.g., assume $\gamma_u$ and $\gamma_x$ are red and $\gamma_v$ and $\gamma_y$ are blue.
We remove $\gamma_v$ and $\gamma_x$ from the structure separating both the red and blue into two components.
For both colors one component is still connected to the remaining cross-annulus connection $\gamma_u$, respectively $\gamma_y$.
The disconnected components cannot both be on the outside (inside) of the annulus.
If so, the connection $\gamma_u$ to $\gamma_x$ must be connected through $x_{in}$, and $\gamma_v$ to $\gamma_y$ through $v_{in}$.
However, as there is no cross-annulus connection between $\gamma_u$ and $\gamma_y$, any connection from $\gamma_u$ to $x_{in}$ separates $\gamma_y$ and $v_{in}$. Hence, both connections cannot exist at the same time.
The red and blue disconnected components are thus on different sides of the annulus and we connect them to $\gamma_u$ respectively $\gamma_y$ without mutually interfering (see Fig.~\ref{fig:annulusRemoveAdjacent}).
\hfill $\qed$
\end{proof}

\begin{corollary}
If a structure exists, then a structure also exists that has exactly one red and one blue connection across each annulus.
\label{cor:oneRedoneBlue}
\end{corollary}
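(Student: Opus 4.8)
The plan is to derive the corollary purely by iterating the two preceding lemmas, so no new geometric construction is needed; the work is entirely bookkeeping on the cyclic sequence of cross-annulus connections around a single annulus. Throughout I maintain the invariant that $C_S$ is a cyclically ordered list of cross-annulus connections, and I track only their colors and their count $|C_S|$.

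First I would run a de-duplication phase. As long as the cyclic list $C_S$ contains two adjacent cross-annulus connections of the same color (possibly separated by non-crossing connections, as in the hypothesis of Lemma~\ref{lem:alternatingColors}), I apply Lemma~\ref{lem:alternatingColors} to delete one of them, obtaining a structure $S'$ with $C_{S'} = C_S \setminus \{\gamma_y\}$, i.e. $|C_{S'}| = |C_S| - 1$. This strictly decreases $|C_S|$, so the phase terminates, and upon termination no two adjacent cross-annulus connections share a color; that is, the colors strictly alternate around the annulus. A standard parity observation then applies: a cyclic sequence whose colors alternate between red and blue must have even length, so after this phase $|C_S| = 2m$ for some $m$.

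Next I would run a halving phase using Lemma~\ref{lem:remove2}. Whenever $|C_S| > 3$ and the connections alternate, that lemma produces a structure with two fewer cross-annulus connections. The crucial point to check is that the alternation is preserved: inspecting its proof, the two removed connections $\gamma_v,\gamma_x$ are a consecutive blue--red pair taken from four consecutive connections $\gamma_u,\gamma_v,\gamma_x,\gamma_y$ of colors red, blue, red, blue; after removal the survivors $\gamma_u$ (red) and $\gamma_y$ (blue) become adjacent, so the alternation of the whole cyclic list is maintained. Hence I can apply Lemma~\ref{lem:remove2} repeatedly, each time reducing the even count by $2$ while keeping the list alternating, until $|C_S| \le 3$. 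Combined with evenness this forces $|C_S| = 2$, i.e. exactly one red and one blue cross-annulus connection.

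It remains to argue that the reduction cannot collapse a color entirely, and this is where the non-degeneracy assumption does the real work, and the main thing to get right. Because the annulus $P$ separates its interior from its exterior and, by assumption, red and blue regions occur both inside and outside, global connectivity of the red polygon forces at least one red cross-annulus connection, and likewise at least one blue one; so the target count of exactly one of each color is genuinely attainable and not an over-reduction ($m \geq 1$). The two reduction steps respect this: Lemma~\ref{lem:alternatingColors} only deletes a connection adjacent to another of the same color, hence never the sole connection of its color, and Lemma~\ref{lem:remove2} fires only when $|C_S| > 3$ (so, under alternation, when at least two of each color are present) and removes exactly one of each color, leaving at least one of each. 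Thus the process is well defined down to $|C_S| = 2$, yielding a structure with exactly one red and one blue connection across the annulus; applying the argument to each annulus in turn (and, for regions with several holes, to each hole via the outer-boundary reduction described above) gives the corollary.
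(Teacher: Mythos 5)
Your proof is correct and takes essentially the same route as the paper, which likewise derives the corollary purely by iterating Lemma~\ref{lem:remove2} and Lemma~\ref{lem:alternatingColors}, with the non-degeneracy of the annulus guaranteeing that one connection of each color survives. The only difference is bookkeeping order---the paper first reduces to at most three connections and then removes a same-color adjacent pair, whereas you establish alternation first and use parity---and your write-up makes explicit the details the paper leaves implicit (the alternation precondition of Lemma~\ref{lem:remove2}, its preservation under the reduction, and why the process cannot eliminate all connections of one color).
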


\begin{lemma}\label{lem:move}
If a structure exists, then there also exists a structure with exactly one red and one blue cross-annulus connection starting from any two regions on the inner annulus and connecting to any two regions on the outer annulus.
\end{lemma}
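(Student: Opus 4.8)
The plan is to begin from the reduced structure guaranteed by Corollary~\ref{cor:oneRedoneBlue}: a structure $S$ containing exactly one red cross-annulus connection $\gamma_r$, running from an inner red region $R_{in}$ to an outer red region $R_{out}$, and exactly one blue cross-annulus connection $\gamma_b$, from $B_{in}$ to $B_{out}$. Since a red connection must begin and end at red regions, and a blue one at blue regions, the ``any two regions on the inner annulus'' are an inner red region $R_{in}^{*}$ and an inner blue region $B_{in}^{*}$, and likewise the outer targets are a red $R_{out}^{*}$ and a blue $B_{out}^{*}$. The goal is to build a structure whose red cross-annulus connection runs $R_{in}^{*}\!\to\!R_{out}^{*}$ and whose blue one runs $B_{in}^{*}\!\to\!B_{out}^{*}$.

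First I would isolate what makes the cross-annulus connections special. As two disjoint arcs each joining the inner to the outer boundary of $P$, $\gamma_r$ and $\gamma_b$ together cut $P$ into two simply connected faces, and every other connection of $S$ lies inside one face and joins two regions on the \emph{same} boundary circle. Such same-circle connections are inessential for solvability: the connectivity they certify between two inner regions can instead be supplied through the hole, and between two outer regions through the remainder of the grid. Thus the only connectivity that $P$ itself is forced to provide is a single red and a single blue bridge between inside and outside; this is exactly what separates an annulus from a disk, and it is the content that survives the reductions of Lemmas~\ref{lem:alternatingColors} and~\ref{lem:remove2}.

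With this observation the construction is direct, which I would phrase as a single redraw rather than an incremental slide. Draw $\gamma_r$ as any simple arc in $P$ from $R_{in}^{*}$ to $R_{out}^{*}$. Cutting $P$ open along this arc turns the annulus into a topological rectangle whose bottom edge is the inner boundary and whose top edge is the outer boundary, with the two copies of $\gamma_r$ as its sides. The targets $B_{in}^{*}$ and $B_{out}^{*}$ now lie on the bottom and top edges, so I can join them by a simple arc inside the rectangle; pulled back to $P$ this yields a blue bridge $\gamma_b$ disjoint from $\gamma_r$. Hence, for \emph{every} choice of the four target regions, the two required bridges are realizable by non-crossing connections. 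To finish I would verify extendability to a painting: each inner region reaches $R_{in}^{*}$ or $B_{in}^{*}$ through the hole and then crosses to the outside along the matching bridge, while the outer regions are interconnected off $P$ and each bridge meets one of them, so both colors remain connected.

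The step I expect to be the real obstacle is precisely the claim that the same-circle connections are inessential, i.e.\ that inner regions are mutually reachable without using $P$, and likewise for the outer regions. This is what rules out a region swept over by a relocated bridge becoming \emph{trapped} between that bridge and the boundary with both of its same-colored anchors stranded on the far side; such trapping is exactly why a naive ``move one endpoint past one region at a time'' argument fails. I would justify inessentiality from the definition of a structure as a maximal set of connections extendable to a painting, together with the non-degeneracy assumption that red and blue regions occur both inside and outside: any same-circle component can be absorbed into the connectivity already available on its side of the annulus, so deleting all of them and re-adding the two target bridges still leaves a set of connections that extends to a painting. Making this absorption precise, and checking that reattaching the absorbed regions to their color never forces a crossing with the relocated $\gamma_r$ and $\gamma_b$, is where the careful work lies.
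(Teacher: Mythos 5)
Your proposal has a genuine gap, and it sits exactly where you predicted it would: the claim that same-circle connections are inessential. You justify that claim by asserting that two inner regions can always be connected ``through the hole'' and two outer regions ``through the remainder of the grid,'' but this is false in general. The regions inside the hole are distinct maximal monochromatic regions; their mutual connectivity may be achievable \emph{only} through the annulus $P$ itself. For instance, if the hole consists of a red cell, a blue cell, and a red cell in a row, the two inner red regions can only be joined by a connection running through $P$ along its inner boundary---there are no purple cells inside the hole to supply that connectivity, and deleting such a connection disconnects the red polygon. The same can happen outside. Consequently you cannot delete all same-circle connections, redraw the two bridges arbitrarily (your cut-open-rectangle argument for drawing two disjoint bridges is fine, but that was never the hard part), and then absorb the other regions afterwards: after the redraw you must re-add same-circle connections restoring the connectivity of every remaining inner and outer region and show that they can be routed without crossing the new bridges or each other. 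That re-routing claim is essentially the entire content of the lemma, and you explicitly defer it (``where the careful work lies'') rather than prove it.

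For comparison, the paper's proof is precisely that careful work, and it uses the incremental strategy you dismissed as naive. It moves one endpoint of one cross-annulus connection to the adjacent interval of the same color, one interval at a time. The trapping phenomenon you worry about is neutralized by a preliminary rewiring step: the blue (and red) connections covering the target interval are rerouted so that the intervals of each color are connected in sorted order around the annulus, after which only one blue connection $\gamma_b$ and one red connection $\gamma_r$ can intersect the newly placed bridge; the two severed components (one per color) are then shown, by a planarity and contradiction argument, to lie on opposite sides of the annulus and hence to be reconnectable without mutual interference. Your instinct that a single global redraw would be cleaner is reasonable, but without a proof of the absorption step it does not go through, whereas the step-by-step version lets one control exactly which connections each move can cross and repair them explicitly.
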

\begin{proof}
Let an \emph{interval} be a maximal arc of the same color on the boundary.
By Corollary~\ref{cor:oneRedoneBlue} we know there exists a structure with exactly one red and one blue connection across the annulus.
Let $\gamma_x$ be the blue cross-annulus connection and $\gamma_y$ the red cross-annulus connection.
We show that each of the endpoints of the cross-annulus connection can freely be moved.
W.l.o.g., assume that $\gamma_x$ is not counter-clockwise adjacent to $\gamma_y$ on the outside of the annulus.
Let $k_{out}$, $l_{out}$, and $m_{out}$ be three intervals in clockwise order on the outer boundary of the annulus.
We say a clockwise connection through the annulus from $k_{out}$ to $m_{out}$ \emph{covers} $l_{out}$.

Let $b_{out}$ be the blue interval that is counter-clockwise adjacent to $y_{out}$ and $r_{out}$ the red interval that is counter-clockwise adjacent to $b_{out}$.
Interval $b_{out}$ may have several incoming blue connections that cover $r_{out}$ (see Fig.~\ref{fig:annulusAdjacent2}(a)).
We can rewire the blue connections inside the annulus to connect the blue intervals in sorted order around the annulus, resulting in only one blue connection $\gamma_b$ that covers $r_{out}$.
Similarly we can also rewire the red connections covering $r_{out}$, and ending at $y_{out}$, to ensure only one red connection $\gamma_r$ covers $r_{out}$.

	\begin{figure}[t]
	\centering
	\includegraphics{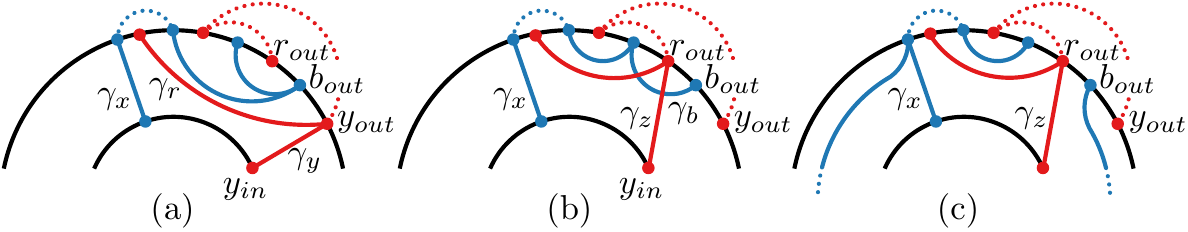}
		\caption{(a) Initial configuration with several connections covering $r_{out}$. (b) Rerouting the blue connections, introducing $\gamma_z$, and rerouting the intersecting red connection leaves only one intersecting (blue) connection. (c) As the blue disconnected component cannot be covered by the new red connection, we can always connect it back to $\gamma_x$.}
		\label{fig:annulusAdjacent2}
	\end{figure}

Remove $\gamma_y$ and insert a new red cross-annulus connection $\gamma_z = (y_{in},r_{out})$.
The connection $\gamma_z$ can only intersect $\gamma_r$ and $\gamma_b$.
Removing $\gamma_r$ results in two red components, one of which contains $\gamma_z$.
Assume w.l.o.g. that $y_{out}$ is in the same connected component as $\gamma_z$. As $\gamma_r$ intersected $\gamma_z$, the disconnected component can be connected to $\gamma_z$ while only intersecting $\gamma_b$ (see Fig.~\ref{fig:annulusAdjacent2}(b)).

Removing $\gamma_b$ results in two blue components, one of which contains $\gamma_x$.
We prove that $b_{out}$ must be part of the blue component not containing $\gamma_x$.
Assume to the contrary that $b_{out}$ is still connected to $\gamma_x$.
Interval $r_{out}$ must be connected to $y_{out}$ outside of the annulus as there was only one red cross-annulus connection and $\gamma_b$ blocked any connection through the inside of the annulus.
Similarly, interval $b_{out}$ must have been connected through the outside of the annulus, as it is separated from any other region inside the annulus by $\gamma_y$ and $\gamma_z$.
However, they cannot both be connected through the outside of the annulus, as the connection $r_{out}$ to $y_{out}$ separates $b_{out}$ and $x_{out}$ on the outside of the annulus.
Contradiction.

Therefore, we can safely reconnect the disconnected blue component through the annulus to $\gamma_x$ (see Fig.~\ref{fig:annulusAdjacent2}(c)).
Repeatedly moving the end-point of one of the cross-annulus connections allows the creation of any configuration of the two red and blue cross-annulus connections without invalidating the structure.\hfill $\qed$
\end{proof}

Lemma~\ref{lem:move} implies that we can cut the annulus open to reduce the number of holes of a purple region by one without changing the solvability of the problem.
Together with Lemma~\ref{the:dualPainting2}, this then implies Theorem~\ref{the:dualPaintingFinal}.

\section{Optimizing Panels}
\label{sec:painting}

\begin{wrapfigure}[8]{r}{0.33\linewidth}
  \centering
  \vspace{-2\baselineskip}
  \includegraphics[page=2]{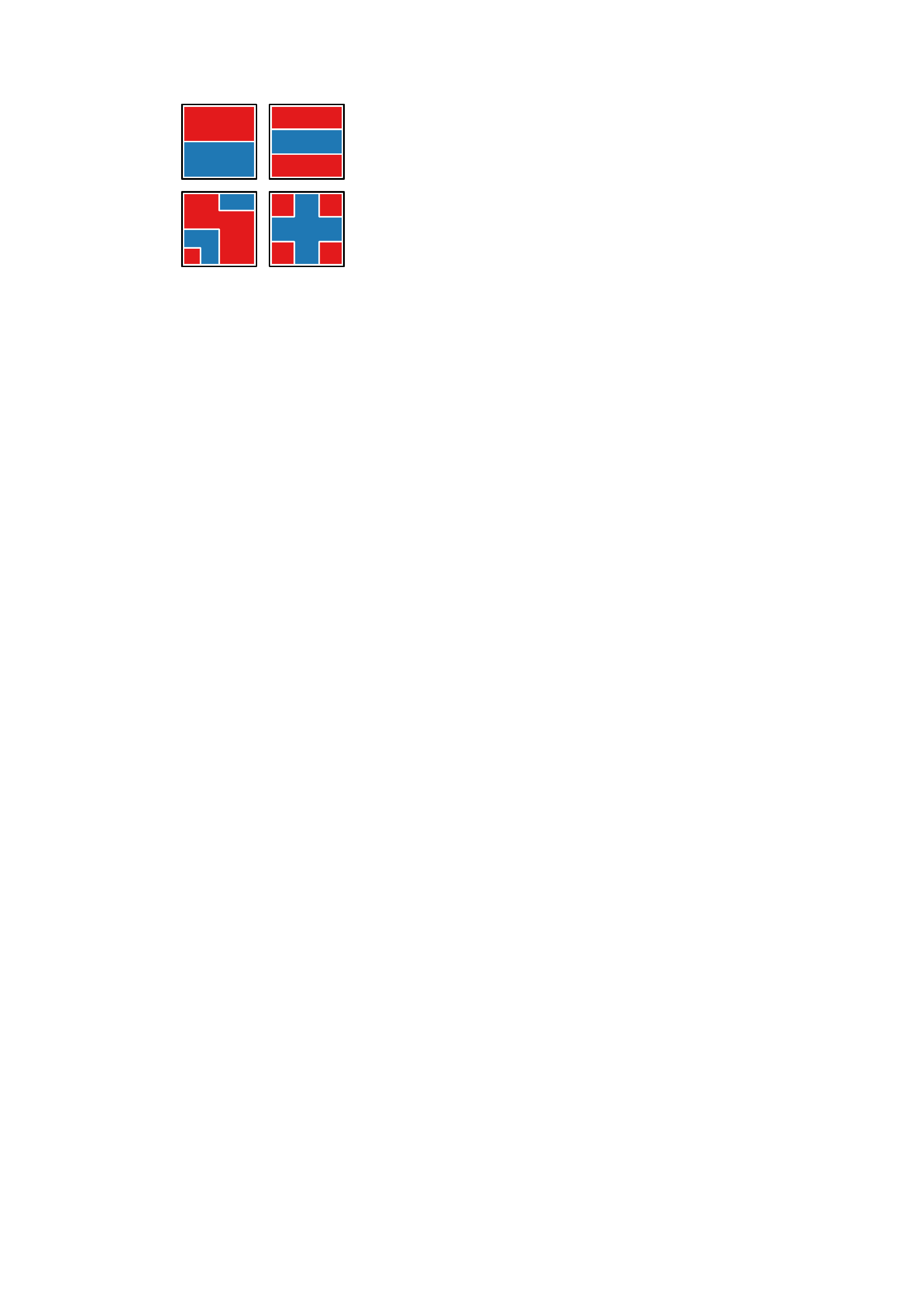}
  \caption{Panels with complexity $3$ and $5$ respectively.}
  \label{fig:panelcomplexity}
\end{wrapfigure}
As shown, not all colored grids admit a painting.
Here we investigate the design of the panels themselves, assuming
that some painting is possible. 
To this end, we define the complexity of a panel as
the number of \emph{pieces} of maximal red and blue areas in the panel, see Fig.~\ref{fig:panelcomplexity}.
The complexity of a painting is the maximal complexity of any of its panels.
A $t$-panel ($t$-painting) has complexity $t$.

Assuming some painting exists, we prove in this section that a $5$-painting exists in general and that even a $2$-painting exists if there are no white cells.

\subsection{Ensuring a 5-painting}
\label{ssec:semiregular}

We prove here that a $5$-painting can always be realized.
To this end, we show that a valid painting for a colored grid
 can be redrawn to include no more than three colored intervals along each side of all panels.

\begin{figure}[b]
	\begin{minipage}[t]{.75\textwidth}
		\centering
        \includegraphics[]{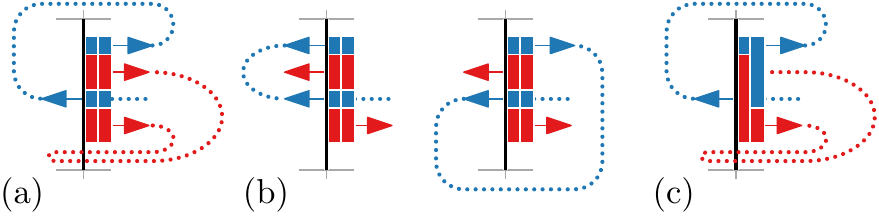}
        \caption{Reducing the number of intervals along a side of panel $\pi$, where there are at least four. (b) The two middle directions cannot be the same, as we cannot connect them with nonintersecting paths. (c) Shortcutting inside $\pi$ reduces the number of intervals while maintaining a painting. }
        \label{fig:borderColors}
	\end{minipage}
	\hfill
	\begin{minipage}[t]{.22\textwidth}
	\centering
	\includegraphics{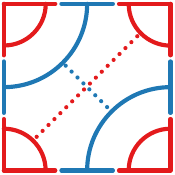}
	\caption{A panel with six pieces can always be reduced to have five, using either dotted line.}
	\label{fig:5-painting}
	\end{minipage}
	
\end{figure}

\begin{lemma}\label{lem:threeBorders}
If a 2-colored grid admits a painting, then it admits a painting where each panel $\pi$ has at most 3 intervals of alternating red and blue along each side.
\end{lemma}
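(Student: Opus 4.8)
The plan is to prove the statement by a local \emph{interval-reduction} argument: starting from any valid (connected) painting, I repeatedly redraw the painting inside a single panel so as to decrease the number of intervals along one of its sides, until no side of any panel carries four or more intervals. Since a fully coloured (non-white) cell contributes a single interval per side, only purple panels can ever violate the bound, so I fix a purple panel $\pi$ and a side $e$ of it that still has at least four intervals. Reading along $e$ the colours alternate, so I may select four consecutive intervals $I_1,I_2,I_3,I_4$; without loss of generality $I_1,I_3$ are red and $I_2,I_4$ are blue, and I focus on the two \emph{middle} intervals $I_2$ (blue) and $I_3$ (red), together with the pieces $p_2\supseteq I_2$ and $p_3\supseteq I_3$ of $\pi$ that contain them (Fig.~\ref{fig:borderColors}(a)).

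The crucial structural observation is that $p_2$ and $p_3$ must connect to the rest of their colour in \emph{opposite} directions. Because the painting is connected and the red and blue polygons are disjoint, the path $\beta$ that joins $p_2$ to the global blue polygon and the path $\rho$ that joins $p_3$ to the global red polygon are disjoint curves of different colours. If both left the neighbourhood of $e$ on the same side --- say both routed past $I_4$ --- then the curve emanating from the more leftward middle piece would be forced to cross the oppositely coloured piece or curve lying between it and that exit, which is impossible in a valid painting. Hence one of $p_2,p_3$ routes its global connection clockwise and the other counter-clockwise around $\partial\pi$ (Fig.~\ref{fig:borderColors}(b)); this is exactly the statement that ``the two middle directions cannot be the same.''

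Given opposite directions, I would then perform the shortcut of Fig.~\ref{fig:borderColors}(c). The small pocket of $\pi$ bounded below by the part of $e$ under $I_2\cup I_3$ and above by the two interface arcs is free to be recoloured, because each middle piece already carries its global connection away in a fixed direction. I reroute a monochromatic bridge just inside $e$ so that two consecutive same-coloured intervals merge across the eliminated middle interval; this strictly decreases the number of intervals on $e$, keeps red and blue disjoint, and --- using the opposite-direction property --- keeps both polygons connected, since the piece whose port on $e$ disappears remains attached through the direction it was already heading. Crucially the modification is confined to a neighbourhood of $e$, so the interval counts on the other three sides of $\pi$, and on all other panels, are unaffected. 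Iterating over all sides with at least four intervals, the total interval count is a nonnegative integer that strictly decreases at each step, so the process terminates in a painting with at most three intervals per side.

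The main obstacle is this second step, made delicate by several degeneracies that the sketch glosses over: a single piece may meet $e$ in more than one of the four intervals (so $p_2,p_3$ and the outer pieces need not be distinct), interface arcs may end at the corners of $\pi$, and the cell $\pi'$ across $e$ may be white or monochromatic, which constrains whether a connection may ``dip'' across $e$ rather than staying inside $\pi$. Handling these cleanly requires phrasing the direction argument in terms of a thin slab neighbourhood of $e$ and the cyclic, non-crossing matching of interface endpoints along $\partial\pi$, rather than the naive left/right picture. A secondary obstacle is the panel-validity bookkeeping: the recolouring must never erase the last piece of a colour from a purple cell. This is guaranteed because we act only when a side carries at least four intervals, leaving ample pieces of both colours, but it must be verified explicitly in the corner and shared-region cases.
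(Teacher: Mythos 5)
Your overall plan is the paper's: find a side with at least four intervals, focus on the two middle ones, argue that their connections are routed ``differently'', shortcut to remove two intervals, and iterate until no side violates the bound. The genuine gap is in your key structural claim. You formulate the dichotomy as clockwise versus counter-clockwise routing around $\partial\pi$, but that is both ill-defined (a connecting path need not stay near $\partial\pi$; it can cut through the interior of $\pi$, or cross $e$ into the neighbouring cell) and false as stated: the blue connection between the blue middle interval and the other blue interval can hug $e$ on the \emph{inside} of $\pi$, while the red connection between the two red intervals runs through the neighbouring cell on the \emph{outside} of $e$; both then head the same way along $e$ and never cross. The correct dichotomy --- the one the paper's proof uses --- is whether each middle interval's connecting path stays inside $\pi$ or exits across $e$ into the neighbouring cell. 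Because the two pairs of endpoints interleave along $e$ and the paths are disjoint, they cannot both stay inside (they would have to cross), and they cannot both exit (an exiting path of one colour, together with the segment of $e$ it spans, encloses the outer side of the other colour's middle interval and separates it from its partner), so exactly one exits. Nothing in your clockwise/counter-clockwise formulation yields this, and you yourself flag that the argument ``requires rephrasing'' in terms of a slab around $e$ --- that rephrasing is the missing idea, not a technicality.

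The gap propagates into the shortcut step, which is not colour-symmetric. One must recolor precisely the middle interval whose connection \emph{exits} $\pi$, and then repair the painting by moving the other interval of that colour an epsilon inside $\pi$ and stretching it across the opposite-coloured middle interval. This single move does double duty: it reconnects the recoloured colour (its piece's remnant inside $\pi$ is reattached, while its outside attachment survives at the other interval), and it breaks the monochromatic cycle that the recolouring creates --- a cycle you never mention, even though leaving it in place disconnects the other colour, since whatever of that colour lies inside the cycle (in particular the interior remnant of the recoloured piece) is cut off. Your justification, ``the piece whose port on $e$ disappears remains attached through the direction it was already heading,'' silently assumes exactly the inside/outside information you did not establish, and it never specifies which of the two middle intervals is eliminated. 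So both halves of the argument --- the direction claim and the correctness of the recolouring --- have real holes, even though the skeleton of the proof matches the paper's.
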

\begin{proof}[sketch]
Assume that a panel $\pi$ has at least 4 intervals of alternating red and blue on the left-side of $\pi$.
As the painting is valid, both blue (/red) intervals are connected in the painting.
For each interval we identify whether the path exiting or entering $\pi$ connects to the other interval of the same color (see Fig.~\ref{fig:borderColors}(a)).
The red and blue path cannot leave or exit the border of $\pi$ in the same direction for the middle two intervals (see Fig.~\ref{fig:borderColors}(b)).
To reduce the number of intervals, we recolor the interval by shortcutting both the blue and the red piece inside $\pi$ (see Fig.~\ref{fig:borderColors}(c)).
\hfill\qed
\end{proof}

\begin{theorem}
\label{the:5painting}
If a partially 2-colored grid admits a painting, then it admits a 5-painting.
\end{theorem}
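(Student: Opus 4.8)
The plan is to start from an arbitrary valid painting and first apply Lemma~\ref{lem:threeBorders}, so that every side of every panel carries at most three alternating red--blue intervals. From there I would argue panel by panel, shrinking the number of pieces until none exceeds five, using only local recolourings inside a single panel so that the global connectivity of both colours is preserved throughout.

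First I would bound the number of pieces in terms of the boundary. In a valid painting every piece must reach the panel's boundary, since an enclosed piece could never join its (connected) colour class; consequently the red--blue interface inside a panel is a plane forest whose leaves are exactly the colour-change points on the boundary. If the boundary has $a$ colour changes and this forest has $c$ components, a short application of Euler's formula gives exactly $1 + a - c$ pieces. With at most three intervals per side we have $a \le 12$, and redrawing the interface as a non-crossing perfect matching ($c = a/2$) already realises the minimum $1 + a/2 \le 7$ pieces for the given boundary. The remaining task is to descend from seven (and six) down to five.

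The reduction is the shortcut illustrated in Fig.~\ref{fig:5-painting}: whenever a panel has two pieces of one colour separated by a single piece of the other colour, I would merge the two like-coloured pieces with a thin channel, pushing the sandwiched piece aside so that it still meets its own boundary interval. This fuses two pieces into one while the displaced piece keeps a boundary interval and hence stays attached to its (globally connected) colour class; choosing \emph{either} of the two possible channels handles both parities. Each such move strictly decreases the piece count without creating a new interval on any side, so it never interferes with neighbouring panels nor with the guarantee of Lemma~\ref{lem:threeBorders}.

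The hard part will be proving that such a shortcut is available whenever a panel still has more than five pieces---equivalently, that the five-piece configurations are exactly the irreducible ones, matching the tightness example. Concretely, I expect to analyse the tree of pieces dual to the interface forest and show that any excess of pieces exposes a like-coloured pair sandwiching a single opposite piece that can be rerouted \emph{without} crossing a third piece and \emph{without} severing a global connection. I would also need to dispose of the extremal fully-alternating boundary (all four sides with three intervals), the only way to reach seven pieces; here the freedom to recolour shared edges between two purple cells should let me break the alternation and fall back into the reducible regime. Verifying that every rerouting leaves both colour polygons connected, leaning on the validity of the starting painting, is the principal technical obstacle.
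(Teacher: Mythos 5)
Your setup---start from Lemma~\ref{lem:threeBorders}, observe that every piece must reach the panel boundary, and count pieces via Euler's formula---is sound, and your count of $1+a-c$ pieces is correct. But the proposal defers exactly the step that constitutes the proof: you assert, and explicitly postpone proving, that whenever a panel has more than five pieces a ``sandwich'' shortcut is available inside it. At the level of a single panel this claim is in fact \emph{false}: if all four corners of $\pi$ carry a colour change (three intervals per side, $a=12$ colour changes in total), then every interface component is a tree with at least two leaves on the boundary, so $c\le a/2=6$, and \emph{every} interior redrawing yields at least $1+a-c\ge 7$ pieces. No channel, rerouting, or matching drawn inside the panel can do better; the boundary pattern itself must be changed. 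Your closing remark about ``recolouring shared edges between two purple cells'' gestures at this, but it is undeveloped and is not the right fix (the neighbouring cell need not be purple at all).

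The paper resolves this with two devices your proposal lacks. First, a corner trick: if a red and a blue interval meet at a corner of $\pi$, extend one of them \emph{inside $\pi$} around the corner, temporarily creating four intervals on that side, and re-apply Lemma~\ref{lem:threeBorders}; afterwards every corner lies in the interior of some interval, so the at most $4\times 3=12$ side-intervals comprise at most eight distinct ones and the bad fully-alternating boundary cannot occur. Second, any piece meeting the boundary in only one interval is simply removed (this is safe while the panel has more than five pieces), after which every remaining piece accounts for at least two of the twelve side-intervals, giving at most $\lfloor 12/2\rfloor=6$ pieces directly---no matching argument is needed, and the bound is $6$, not your $7$. Finally, the six-piece case has a forced structure (four like-coloured corner pieces plus two pieces joining the four middle intervals), and one explicit connection---merging either the two blue pieces or two red corners, as in Fig.~\ref{fig:5-painting}---reduces it to five. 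These steps are the actual content of the theorem; without them your proposal is a plan rather than a proof.
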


\begin{proof}
By Lemma~\ref{lem:threeBorders} there are at most three alternatingly colored intervals along each side of $\pi$.
If a red and blue interval meet in a corner, we extend one in $\pi$ around the corner to get four intervals and use Lemma~\ref{lem:threeBorders} to reduce it back to at most three.
If we have more than five pieces, a piece that has only one interval in $\pi$ can be removed while maintaining a painting.
Each remaining piece connects at least two intervals: with $k$ intervals, the number of pieces is at most $\lfloor k/2 \rfloor$.
A $6$-panel thus requires $12$ intervals: four equal-color (red) corners and a middle interval (blue) along each side.
This enforces two pieces between the blue intervals, and one in each corner.
However, we can now reduce the number of pieces to five, connecting either two blue pieces or two red corners (Fig.~\ref{fig:5-painting}).
\hfill\qed
\end{proof}

\begin{figure}[t]
\centering
\includegraphics[]{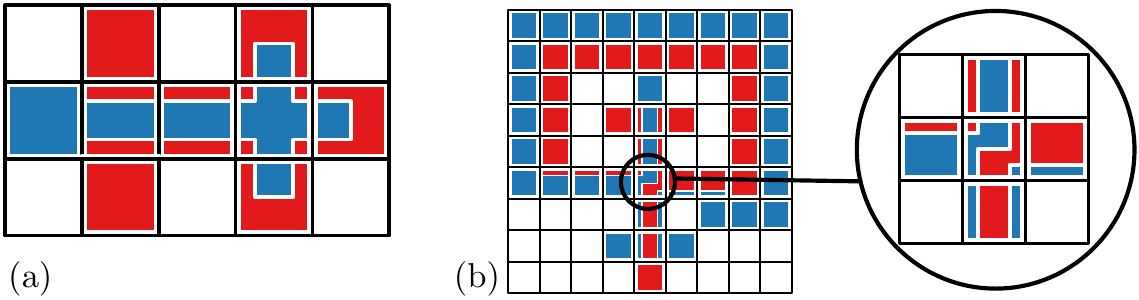}
\caption{Examples requiring complex panels. (a) A colored grid requiring a 5-panel. (b)~A colored grid requiring a 4-panel with two pieces of both colors in the same cell.}
\label{fig:complexPanels}
\end{figure}

This bound is tight as a 5-panel may be required when the grid includes white cells (Fig.~\ref{fig:complexPanels}(a)).
A 5-panel with at least two pieces of each color is never required---though such a 4-panel may be necessary (Fig.~\ref{fig:complexPanels}(b)).
The above proof implies that there is only one option to create such a 5-panel:
it has only two ways to connect the two blue pieces; both can be simplified to a 4-panel (Fig.~\ref{fig:2-3-panel-not-needed}).

\begin{figure}[t]
\centering
\includegraphics{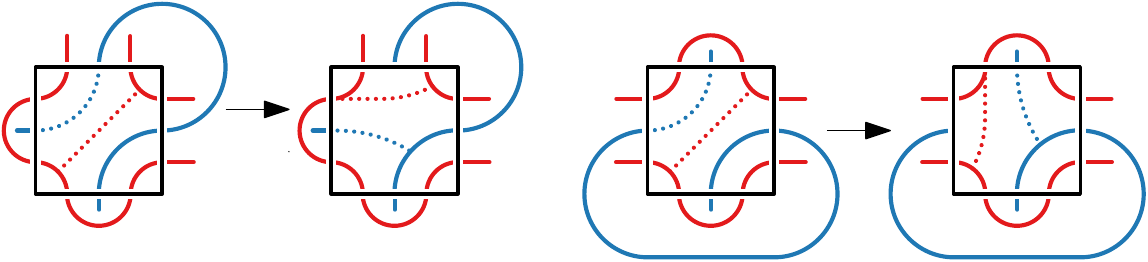}
\caption{There are two configurations for a 5-panel where both colors have at least two pieces. Both possible configuration can be simplified to a 4-panel.}
\label{fig:2-3-panel-not-needed}
\end{figure}

\subsection{Ensuring a 2-painting}
\label{ssec:simple}
We show that a fully 2-colored grid (rectangular and without white cells) even admits a $2$-painting, provided it admits any painting.
As an intermediate step, we first prove that a painting exists that uses only one blue piece in any panel.

\begin{lemma}\label{lem:semiregular}
If a fully 2-colored grid admits a painting, then it admits a painting in which each panel has at most one blue piece.
\end{lemma}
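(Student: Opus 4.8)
The plan is to start from an arbitrary painting and repeatedly simplify it until every panel contains at most one blue piece, maintaining a valid (connected) painting throughout. Since the grid is fully $2$-colored there are no white cells, so the red region is exactly the complement of the blue region inside the bounding rectangle $Q$; by assumption both the red polygon $R$ and the blue polygon $B$ are connected and together tile $Q$. I first record two structural facts about a single panel $s$. First, every maximal blue (red) piece of $s$ must touch $\partial s$: a piece enclosed in the interior of $s$ could never reach the rest of its color and would violate connectivity. Consequently, if $s$ has two blue pieces $B_1$ and $B_2$, they are separated inside $s$ by red, i.e.\ there is a red \emph{channel} $K$ running across $s$ between two of its sides, with $B_1$ and $B_2$ on opposite sides of $K$. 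Second, red and blue cannot both \emph{cross} a single panel, since two transversal opposite-color channels would have to meet in a single interior point, and such point contacts are not considered connections in a valid painting.

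The core of the argument is a local exchange that removes one blue piece from a panel $s$ split by a red channel $K$, based on whether $K$ is essential for the connectivity of $R$. If recoloring $K$ blue does \emph{not} disconnect $R$ (its two red ends remain joined through some other route), then I simply recolor $K$, merging $B_1$ and $B_2$ into one blue piece; the painting stays valid and the total number of blue pieces drops. If instead $K$ is a \emph{bridge} of $R$ whose removal disconnects red, then by the second structural fact $s$ cannot simultaneously be a bridge for blue, so the two blue pieces of $s$ do not both need to pass through $s$. Hence I reroute so that all blue incident to $s$ lies on a single side of $K$ — giving $s$ one blue piece — and reconnect the orphaned piece $B_2$ to the blue polygon through an alternative route, which exists precisely because $s$ is not a blue bridge. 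A Jordan-curve argument makes both cases precise: an external blue path joining $B_1$ and $B_2$ together with $K$ bounds a region, and whether red can be re-closed around it (first case) or blue can be detoured around it (second case) is exactly governed by which color is forced to cross $s$.

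The step I expect to be the main obstacle is the bridge case, specifically guaranteeing that detouring the orphaned blue piece does not create new blue pieces elsewhere and that the process terminates. I would control this with a potential argument: let the potential be the total number of blue pieces summed over all panels, and show that each exchange can be carried out so that it \emph{strictly} decreases this potential rather than merely moving a blue piece from one cell to another. In the non-bridge case this is immediate; in the bridge case the rerouted connection can be pushed to follow the boundary of the bounded region identified above, so that in each cell it traverses it only augments an already-blue piece instead of spawning a fresh one. Since the potential is a nonnegative integer, the process halts, and upon termination no panel has two blue pieces. As a sanity check on the target structure, note that in the fully colored setting the red/blue interface is generically a single simple arc across $Q$; straightening this arc so that it meets each purple cell in a single sub-arc yields one piece of each color per panel, which is consistent with — and in fact stronger than — the one-blue-piece conclusion needed here.
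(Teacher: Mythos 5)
Your overall strategy (local exchanges driven by a potential function) is genuinely different from the paper's, but it has a real gap in the bridge case, and that case is where all the difficulty lives. The implication you draw from your second structural fact is unsound: from ``a red and a blue channel cannot cross inside $s$'' you conclude that if $K$ is a bridge of the red polygon then $s$ cannot be a bridge for blue. But essential channels of opposite colors need not cross: $K$ can run between the left and right sides of $s$ while an essential blue channel above it runs from the top side to the left side, so a single panel can perfectly well carry a cut of red and a cut of blue at the same time. Moreover, what your argument actually needs is stronger and different: not that ``$s$ is not a blue bridge,'' but that one of the two specific pieces $B_1,B_2$ can be recolored red while the outside blue attached to it remains (or can be made) connected to the rest of the blue polygon. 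Nothing in the proposal establishes this.

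The reconnection/potential step is likewise asserted rather than proved. The rerouted blue connection must (i) pass only through cells whose color set contains blue --- in a fully 2-colored grid a pure red cell cannot carry any blue; (ii) not disconnect the red polygon where it cuts through red; and (iii) not create new blue pieces in the panels it traverses, or your potential fails to decrease and the process need not terminate. Saying the detour ``can be pushed to follow the boundary of the bounded region'' addresses none of these: wherever the detour genuinely joins two separate blue components it must cross red territory, and at that crossing you face exactly the question you started with. What is missing is a global certificate that a safe crossing exists. This is precisely what the paper's proof gets from graph duality: it discards the given painting inside each purple region, rebuilds a canonical one (a small blue square in each panel plus a blue spanning forest rooted at the boundary blue intervals, which by construction yields one blue piece per panel and keeps red connected), and then merges blue components one at a time, invoking Lemma~\ref{lem:cotree} on $G_b$ and $G_r$ to guarantee that each merge only breaks a red cycle and never disconnects red. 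Your argument needs an analogue of that lemma (or of the duality in Lemma~\ref{the:dualPainting}) to justify the bridge case; without it the induction does not go through.
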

\begin{proof}[sketch]
    Since the grid admits a painting, we show how to modify the painting of each purple region $P$ to ensure that the lemma holds.
    We first create a blue spanning forest in the panels of $P$ that connects the panel-centers of adjacent panels.
    This ensures that each panel has exactly one blue piece inside, but may result in a disconnected blue polygon.
    However, since we know that a painting exists and the current solution maps to some forest in $G_b$, Lemma~\ref{lem:cotree} implies that its dual $G_r$ has a cycle around some tree in the forest.
    Hence, we can add connections between unconnected subpolygons to create a single blue polygon again, without disconnecting the red polygon.
	\hfill$\qed$
\end{proof}

    The above construction relies on the alternation of the blue and red intervals along the boundary of $P$.
    As there are no white cells we can guarantee this alternating pattern. 
    Indeed, the higher complexity with white cells is caused by long connections along a purple region's boundary that are needed to achieve this alternating pattern for a partially colored grid (e.g., Fig.~\ref{fig:complexPanels}).

	\begin{figure}[b]
		\centering
		\includegraphics[page=1]{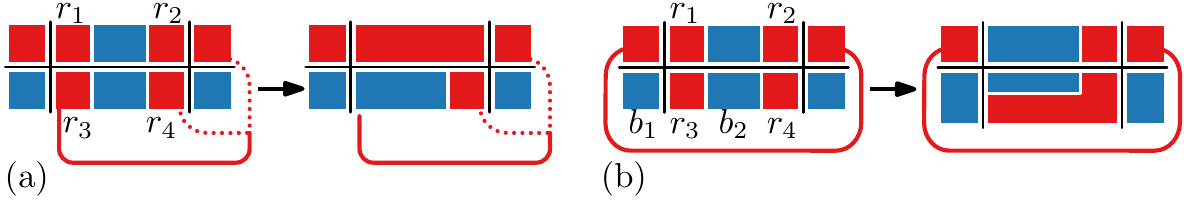}
		\caption{Reducing panel complexity when there are two red corners along the same panel side. (a) The corners are connected via adjacent (or the same) sides of the panel: connect $r_1$ and $r_2$, and recolor $r_3$ to blue. (b) The corners are connected via opposite sides: recolor $r_1$ to blue and connect $r_3$ and $r_4$ as well as $b_1$ and $b_2$.}
		\label{fig:simplepainting1}
	\end{figure}

\begin{theorem}\label{the:2painting}
If a fully $2$-colored grid admits a painting, then it admits a $2$-painting.
\end{theorem}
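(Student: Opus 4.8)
The plan is to start from the structured painting guaranteed by the preceding lemmas and then eliminate the remaining red pieces one at a time. By Lemma~\ref{lem:threeBorders} we may assume that each panel has at most three alternating intervals along every side, and by Lemma~\ref{lem:semiregular} we may further assume that each panel contains at most one blue piece. It then suffices to show that every panel can also be made to contain at most one red piece, without destroying the connectivity of either color polygon or reintroducing a second blue piece. Since a purple panel with a single blue piece has complexity $1 + (\text{number of red components})$, the goal is precisely to drive the number of red components in each panel down to one.

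First I would characterize how extra red pieces can arise. With a single connected blue piece and at most three intervals per side, the red area of a panel is split into several components only when the blue piece meets the panel boundary in more than one arc. Here I would exploit that the grid is fully colored: the intervals along the boundary of every purple region already alternate in color without the long boundary connections that force higher complexity in the white-cell case (as noted in the discussion following Lemma~\ref{lem:semiregular}). This lets me argue that any panel of complexity larger than two exhibits two red corners sharing a common side of $\pi$, with the single blue piece separating them, reducing the analysis to the finitely many local configurations depicted in Fig.~\ref{fig:simplepainting1}.

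Next I would perform the local surgery for each configuration, distinguishing how the two red corners are joined to the rest of the red polygon by the connections leaving $\pi$. If they are connected via the same side or via two adjacent sides (Fig.~\ref{fig:simplepainting1}(a)), I connect $r_1$ and $r_2$ inside $\pi$ into a single red piece and recolor the now-isolated piece $r_3$ to blue, where it fuses with the unique blue piece. If they are connected via opposite sides (Fig.~\ref{fig:simplepainting1}(b)), I instead recolor $r_1$ to blue and reconnect the remaining red fragments $r_3$ and $r_4$ as well as the blue fragments $b_1$ and $b_2$. In each case the rewiring is confined to $\pi$ and strictly decreases that panel's red-component count by one.

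The main obstacle is verifying that each surgery preserves the global invariants: red must stay connected, blue must stay connected, and no panel may acquire a second blue piece. For the recoloring step I would argue that the red corner being recolored is redundant for red connectivity, since its attachment to the rest of the red polygon runs through a neighboring cell whose red remains intact, while the recolored area merges with the single existing blue piece so that no new blue component is created; this is exactly where the absence of white cells, hence guaranteed color alternation along the boundary of every purple region, is essential, mirroring the connectivity bookkeeping in Lemma~\ref{lem:semiregular} that relied on the cotree duality of $G_r$ and $G_b$ (Lemma~\ref{lem:cotree}). Because each step reduces the total number of red pieces over all panels while keeping every panel at exactly one blue piece and keeping both polygons connected, the process terminates, leaving every purple panel with exactly one red and one blue piece, i.e.\ a $2$-painting.
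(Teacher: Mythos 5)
There is a genuine gap in your case analysis. You claim that any panel of complexity larger than two must exhibit two red corners \emph{sharing a common side} of $\pi$, and you therefore restrict attention to the configurations of Fig.~\ref{fig:simplepainting1}. This claim is false: after the preprocessing (retracting red pieces that connect to only one neighbor, and the four-corner recoloring rule), a panel can still have exactly two red pieces sitting at \emph{diagonally opposite} corners, with the single blue piece running between them --- precisely the configurations of Fig.~\ref{fig:simplepainting3}, which the paper treats as two further cases (distinguished by whether the corners $p_1$, $p_2$ of the neighboring panels are blue or red). These diagonal cases cannot be dispatched by your two surgeries: connecting the two red corners inside $\pi$ now forces a red path around one of the blue boundary arcs, i.e.\ a red $L$-shape along a side of the panel, which in general splits the blue polygon into up to three components. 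The paper's reduction rules 3 and 4 handle exactly this: the $L$-shape is routed along the side determined by the colors of $p_1$ and $p_2$, and the resulting blue fragments are rejoined ($b_1$ with $b_4$, or $b_1$ with $b_2$ and $b_3$ with $b_4$, plus the recoloring of $b_5$) in a way that provably cuts only red cycles. Without these cases your induction simply does not cover all panels, so the process can stall on a panel of complexity three that matches none of your rules.

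A secondary, smaller issue: your termination argument (``each step reduces the total number of red pieces over all panels'') is not quite sound as stated, because several of the reductions --- including the opposite-sides case you do include --- modify \emph{neighboring} panels as well (e.g.\ the paper fully colors the rest of the bottom panel red in case 2, and recolors a border of an adjacent panel in case 4). One must check that these side effects do not create new red pieces or new violating configurations elsewhere, which is why the paper interlaces the reduction rules with the red-piece removal step and argues that no rule increases the number of blue pieces in any panel. Your core strategy (preprocess via Lemma~\ref{lem:semiregular}, then do local surgeries that decrease red components) is the paper's strategy, but the missing diagonal cases are the heart of the difficulty, not a boundary case.
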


\begin{proof}[sketch]
	Since the fully 2-colored grid admits a painting, Lemma~\ref{lem:semiregular} implies that there is a painting $\Pi$ where the panel for every purple cell contains only a single blue piece.
For any purple cell with more than one red piece, we remove red pieces that only connect to one neighboring panel and recolor red corners to blue if the other three cells incident to that corner have a red corner as well. Now, the pattern of the panel matches one of the following four cases.

	\begin{enumerate}
		\item There are two red corners $r_1$ and $r_2$ on the same side of the panel. The connecting path exits the current panel via the same side and enters either on the same or adjacent side (see Fig.~\ref{fig:simplepainting1}(a)).
		\item There are two red corners $r_1$ and $r_2$ on the same side of the panel. The connecting path exits the panel via opposite sides of the panel (see Fig.~\ref{fig:simplepainting1}(b)). The blue piece connects only downwards in the panel below.
        \item There are two red corners $r_1$ and $r_2$ that do not share a common side of the panel. In this case the other corners are blue, otherwise one of the two previous cases applies (see Fig.~\ref{fig:simplepainting3}(a)). Furthermore, either $p_1$ or $p_2$ is blue.
        \item There are two red corners $r_1$ and $r_2$ that do not share a common side of the panel (see Fig.~\ref{fig:simplepainting3}(b)). Furthermore, both $p_1$ and $p_2$ are red.
	\end{enumerate}	
		
	\begin{figure}[t]
			\centering
			\includegraphics{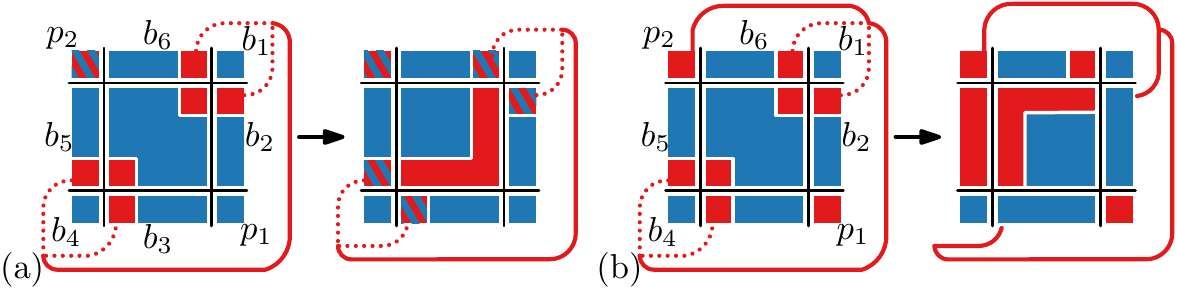}
		\caption{Two diagonally positioned red corners. The complexity of the panel can be reduced by introducing a red $L$-shape that connects all the red. (a) Reducing complexity if either $p_1$ or $p_2$ was blue. (b) Reducing complexity if both $p_1$ and $p_2$ were red.}
		\label{fig:simplepainting3}
	\end{figure}

We design a reduction rule for each case, as sketched in Fig.~\ref{fig:simplepainting1} and Fig.~\ref{fig:simplepainting3}. 
Repeated application of the reduction rules, interlaced with the reduction of the number of red pieces in a panel, results in a 2-painting.	
\hfill\qed
\end{proof}

\section{Conclusion}
\label{sec:conclusion}

We took the first steps towards investigating a disjoint-polygons representation for visualizing set memberships (hypergraphs).
We investigated the 2-color version in which each element is positioned as a cell in a (unit-)grid.
We showed how to test whether a disjoint-polygons representation is possible for a given 2-colored grid.
Moreover, we proved that if such a representation is possible, then we can also bound the complexity of the corresponding ``panels'' (the coloring of a single cell).
Each panel requires at most five colored pieces, and even only two pieces are sufficient when no white cells are present in the grid.

There are myriad options for further exploration.
As not all grids admit a painting, we could study minimizing the number of polygons of the same color.
We have not touched upon variants with more colors: does our approach readily generalize?
However, considering the restrictions already in the studied 2-color variant, it seems likely that many practical instances do not admit a painting.
If we allow rearranging elements, the 2-color variant becomes trivial, but is particularly interesting for multiple colors.
Finally, we may consider the situation where some cells have no assigned set of colors but may be painted using any subset of the colors.
Given enough such cells, the disjoint-polygons encoding can then represent more than Zykov-planar hypergraphs but cannot represent all planar supports.

\paragraph{Acknowledgments}
The authors would like to thank Jason Dykes for fruitful discussions at an early stage of this research.

\clearpage
\bibliographystyle{abbrv}
\bibliography{references}

\clearpage
\appendix

\section{Full Proofs}

\cloneclaim{Lemma \ref{the:dualPainting}}{A 2-colored grid $\Gamma$ in which each purple region $P$ has no holes and $\kappa(P)\leq 4$, admits a painting if and only if the corresponding graphs $G_r$ and $G_b$ are each other's exact duals.}
\begin{proof}
Suppose $\Gamma$ admits a painting $\Pi$. Then there exists a painting $\Pi'$ where, for any purple region $P$, all consecutive neighboring regions of the same color are connected through the boundary of $P$.
Let $n_r$ and $n_b$ denote the the number of vertices in $G_r$ (red regions) and in $G_b$ (blue regions) respectively.
We observe that the number of purple regions with $\kappa=4$, the number of edges in $G_r$, the number of edges in $G_b$, and the number of intersections is the same, say, $e$.
By construction, $n_b\geq f_r$ and $n_r\geq f_b$.
The purple regions with $\kappa=4$ can be painted to connect two adjacent red regions or two adjacent blue regions but never both.
As there are $n_r$ red regions, at least $n_r-1$ red edges are needed to connect all red vertices (regions). And similarly, at least $n_b-1$ edges are needed to connect all blue vertices (regions). Thus, if $\Gamma$ admits painting $\Pi'$ then $e\geq n_r+n_b-2$.

On the other hand, by Euler's formula, $n_r-e+f_r=2$ and $n_b-e+f_b=2$. Combining these equations and $n_b\geq f_r$, $n_r\geq f_b$, we derive that $e \leq n_r+n_b-2$. Thus, $e = n_r+n_b-2$, and $n_b=f_r$ and $n_r=f_b$. As each red edge intersects exactly one blue edge and vice versa, graphs $G_r$ and $G_b$ are each other's duals.

For the other direction, assume that $G_r$ and $G_b$ are dual graphs. By Lemma~\ref{lem:cotree} there exist two non-intersecting spanning trees of $G_r$ and $G_b$ that specify which adjacent regions of every purple region are to be connected to form connected red and blue polygons (see Fig.~\ref{fig:basecase-graphex}). Given the connectivity information for every purple region, it is straightforward to find the panels for all cells to obtain a connected painting: for every purple region, draw the two spanning trees, and connect any cell that does not have yet a blue or a red piece to the blue or red polygon respectively with a connection without introducing any crossings.
Thus, a 2-colored grid admits a painting if and only if $G_r$ and $G_b$ are dual graphs.
\hfill$\qed$
\end{proof}

\cloneclaim{Lemma \ref{lem:nonintersecting}}{Consider a painting $\Pi$ in which two blue and two red regions, adjacent to a purple region $P$, are connected through $P$. The corresponding vertices in the spiderweb gadget $W$ of $P$ can be connected by non-intersecting bridging paths.}
\begin{proof}
Let $u$ and $v$ be two vertices that represent two blue regions adjacent to $P$ and connected through $P$ in painting $\Pi$.
Let $u$ be on level $0$ of spoke $x$ and $v$ be on level $0$ of spoke $(x+2i)\mod 2k$ for some $1 \leq i \leq \lfloor k/2 \rfloor$.
We connect $u$ and $v$ by a bridging path as described above, which goes to level $2\lfloor (i+1)/2 \rfloor$.
Any connection of two red regions in $\Pi$ through the purple region $P$ cannot cross the connection between the regions that $u$ and $v$ represent in $\Pi$.
This means that the corresponding vertices must both lie on the same side of the bridging path between $u$ and $v$. Denote these vertices as $u'$ and $v'$, and let $u'$ have label $y$ and $v'$ have label $(y+2j)\mod 2k$ for some $1 \leq j\leq \lfloor k/2\rfloor$. 
The bridging path goes to level $2\lfloor j/2 \rfloor+1$.
There can be several cases:
\begin{itemize}
\item When $u'$ and $v'$ lie in between $u$ and $v$ moving in the clockwise order, we have that $j<i$. Thus, the two bridging paths cannot intersect, as the path from $u'$ to $v'$ goes to level $2\lfloor j/2 \rfloor+1 < 2\lfloor (i+1)/2 \rfloor$.
\item When $u'$ and $v'$ lie in between $v$ and $u$ moving in the clockwise order, we have that either $j>i$ or the two bridging paths lie on the opposite sides of the vertex on the innermost level. In the later case the two bridging paths cannot intersect as they are separated by the innermost level, and in the former case, the two bridging paths cannot intersect as the path from $u'$ to $v'$ goes to level $2\lfloor j/2 \rfloor+1 > 2\lfloor (i+1)/2 \rfloor$.\hfill\qed
\end{itemize}
\end{proof}

\cloneclaim{Lemma ~\ref{the:dualPainting2}}{A 2-colored grid $\Gamma$ in which each purple region has no holes admits a painting if and only if the corresponding $G_r$ and $G_b$ are each other's exact duals.}
\begin{proof}
The proof is similar to the proof of Lemma~\ref{the:dualPainting}.
Suppose $\Gamma$ admits a painting $\Pi$. Then there exists a painting $\Pi'$ where, for any purple region $P$, all consecutive neighboring regions of the same color are connected through the boundary of $P$. We can find non-intersecting trees in every spiderweb gadget $W$ corresponding to a purple region $P$ that connect all pairs of vertices of levels $0$ and $1$ in the same way as $\Pi$. First, we create the bridging paths from Lemma~\ref{lem:nonintersecting} for every pair of regions of the same color connected in $\Pi$ through $P$. The bridging paths do not create cycles in $S$, thus, every vertex in $W$ that is still not connected to the levels $0$ or $1$ can be connected to them by growing non-intersecting spanning forests from the vertices on the levels $0$ and $1$.

For the other direction, assume that $G_r$ and $G_b$ are dual graphs. By Lemma~\ref{lem:cotree} there exist two non-intersecting spanning trees of $G_r$ and $G_b$. These spanning trees provide the decisions of which adjacent regions of every purple region to connect, and the topology of the connections. Given the connectivity information, a painting can be constructed in a similar way to Lemma~\ref{the:dualPainting}. 
Thus, a 2-colored grid admits a painting if and only if $G_r$ and $G_b$ are dual graphs.
\hfill\qed
\end{proof}

\cloneclaim{Lemma \ref{lem:alternatingColors}}{If a structure $S$ exists with two adjacent cross-annulus connections $\gamma_x$ and $\gamma_y$ of the same color, possibly separated by non-crossing connections, then there also exists a structure $S'$ where $C_{S'}=C_S\setminus \{\gamma_y\}$.}
\begin{proof}
    \begin{figure}[h]
	\centering
		\includegraphics{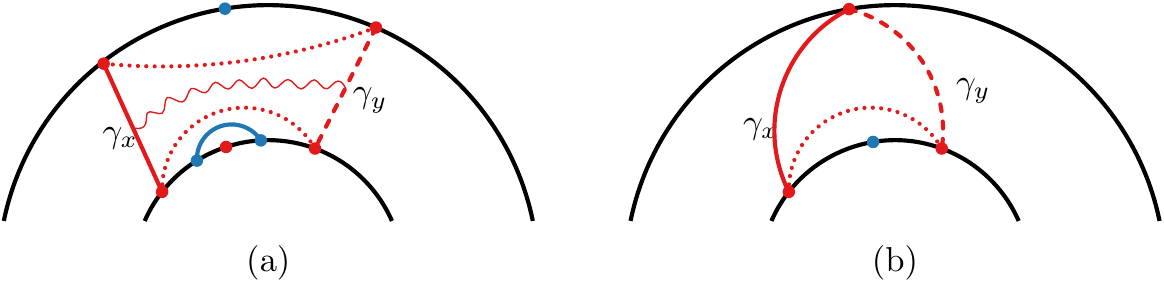}
		\caption{(a) When we have two adjacent cross-annulus connections of the same color, $\gamma_x$ and $\gamma_y$, we can remove $\gamma_y$ (dashed) and keep the same connectivity using two non-crossing connections (dotted). Note that, due to having no cross-annulus connections between $\gamma_x$ and $\gamma_y$, they are implicitly connected (zigzagged line). (b) The analogous case, if there are no blue regions between $\gamma_x$ and $\gamma_y$ on one of the sides.}
		\label{fig:annulusRemoveSameColorAdjacent}
	\end{figure}
    As there are no cross-annulus connections between $\gamma_x$ and $\gamma_y$, these cross-annulus connections must be connected inside the annulus.
    As a result, after removing $\gamma_y$ we can reconnect the structure by introducing two non-crossing connections between $\gamma_x$ and the two regions that $\gamma_y$ connects.
    This does not change the connectivity of the structure, but removes the explicit cross-annulus connection $\gamma_y$ (see Fig.~\ref{fig:annulusRemoveSameColorAdjacent}).
    Note that there cannot be a connection between $\gamma_x$ and $\gamma_y$ outside the annulus as this would disconnect any blue regions in between.
\end{proof}

\cloneclaim{Corollary \ref{cor:oneRedoneBlue}}{If a structure exists, then a structure also exists that has exactly one red and one blue connection across each annulus.}
\begin{proof}
By Lemma~\ref{lem:remove2} we can find a structure with at most three cross-annulus connections. By Lemma~\ref{lem:alternatingColors} a structure with three cross-annulus connections can be reduced to a structure with two cross-annulus connections.
\hfill\qed
\end{proof}

\cloneclaim{Lemma \ref{lem:threeBorders}}{If a 2-colored grid admits a painting, then it admits a painting where each panel $\pi$ has at most 3 intervals of alternating red and blue along each side.}

\begin{proof}
W.l.o.g. assume that a panel $\pi$ has at least 4 intervals of alternating red and blue on the left-side of $\pi$.
We consider the top-most four intervals that are inside $\pi$ and adjacent to the left-side of $\pi$ and w.l.o.g. we assume these intervals are ordered blue, red, blue, red.
As the painting is valid, both blue (/red) intervals are connected in the painting and since there can be no cycles there exists exactly one path connecting them.
For each interval we identify whether the path exiting or entering $\pi$ connects to the other interval of the same color (see Fig.~\ref{fig:borderColors}(a)).
It cannot be the case that the red and blue path both leave or exit $\pi$ in the same direction for the middle two intervals (see Fig.~\ref{fig:borderColors}(b)).
Assume w.l.o.g. that the middle blue connecting path exits $\pi$.
Any path connecting the blue intervals must separate the left side of the top red interval from the bottom red interval.
Hence the connecting path from the middle red interval cannot also exit $\pi$.

To reduce the number of intervals, we recolor the interval of the color whose connecting path exits $\pi$ (blue in Fig.~\ref{fig:borderColors}(a)) to the other color along the boundary of $\pi$.
To keep the blue polygon connected and remove the newly created red cycle, we move the other blue interval an epsilon distance inside $\pi$ and stretch it over the middle red interval. (see Fig.~\ref{fig:borderColors}(c)).
This reduces the number of intervals on the boundary of $\pi$ by two and can be repeated as necessary without affecting the validity of the solution.
\hfill\qed
\end{proof}

\cloneclaim{Lemma \ref{lem:semiregular}}{If a fully 2-colored grid admits a painting, then it admits a painting in which each panel has at most one blue piece.}
\begin{proof}
    Let $\Pi$ be a painting admitted by a fully 2-colored grid $\Gamma$.
    Any panel in $\Pi$ that is uni-colored trivially satisfies our lemma.
    Hence, we consider a purple region $P$ and show how to repanel it to ensure each panel has exactly one blue piece.
    Let $\Pi'$ be the repainted solution, which is identical to $\Pi$ except for the repainted cells in $P$.

    As there are no white cells, the blue and red intervals along the boundary of $P$ must alternate.
    Exceptions are the convex corners of $P$.
    At the convex corners there may be two blue cells adjacent to the same purple cell, separated by a red or purple cell (see Fig.~\ref{fig:toOneBlue}(a) top-left).
    As $\Gamma$ admits a painting these blue intervals must be connected through $P$.
    If both intervals are connected through the outside of $P$ then they isolate the red/purple corner cell from $P$.
	We refer to these connections through $P$ as the blue corners of $P$.
	At the outer boundary of the grid, the same can happen and we can treat them similarly. Note that, although the connection may now span multiple cells, it passes through each at most once.

    \begin{figure}
      \centering
      \includegraphics{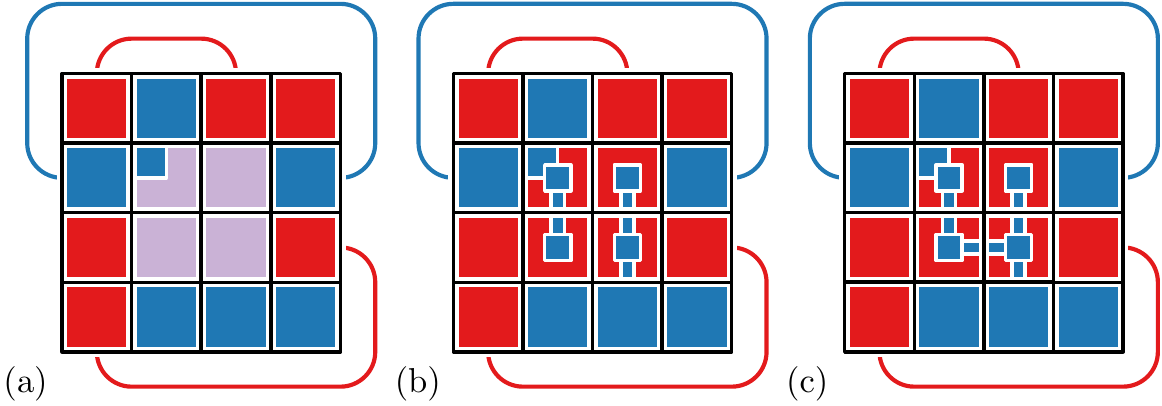}
      \caption{A purple region of four cells that has to be modified, including the direct neighborhood and topological connections in the original painting $\Pi$ outside $P$. (a) The top-left blue corner must be present. (b) Construction of the spanning forest. (c) Connecting two blue pieces breaks a cycle in the red polygon.}
      \label{fig:toOneBlue}
    \end{figure}

    We construct a new painting for $P$ as follows.
    First we place a small blue rectangle at the center of every panel in $P$.
    We then reconnect the blue corners of $P$ by filling the appropriate corners from this center rectangle.
    On these centers we build a spanning forest by connecting to the centers of adjacent cells.
    This spanning forest is built such that each tree in the forest is rooted in a distinct blue interval (see Fig.~\ref{fig:toOneBlue}(b)).
    By construction, all red pieces in $P$ forms a single polygon and thus red is connected.
    If there was only one blue polygon in $P$, then we are done as blue is connected and forms a single piece inside every cell in $P$.

    Assume there are two or more blue polygons in $P$.
    Since we know that a painting exists, only a red cycle can be broken when we connect two blue polygons.
    This is most easily seen in the constructed graphs $G_b$ and $G_r$ (see Theorem~\ref{the:dualPainting}): our constructed forest maps to a forest in $G_b$ and by Lemma~\ref{lem:cotree} we can arbitrarily complete this into a tree while ensuring that $G_r$ has a tree as well.
    Hence, we can pick two arbitrary polygons that have adjacent cells in $P$ and connect these (see Fig.~\ref{fig:toOneBlue}(c)).
    This reduces the number of blue polygons by one while keeping the red connected as the connection only cuts a red cycle.
    We repeat the above for every purple region, to ensure the eventual painting has a single blue piece in every panel.
	\hfill$\qed$
\end{proof}

\cloneclaim{Theorem \ref{the:2painting}}{If a fully $2$-colored grid admits a painting, then it admits a $2$-painting.}

\begin{proof}
	If there is a solution for a fully 2-colored grid $\Gamma$, then by Lemma~\ref{lem:semiregular} there is a solution $\Pi$ where every panel in a purple region only has a single blue piece.
That is, every panel only has red along the sides or corners.
All red pieces must include at least one corner by construction of Lemma~\ref{lem:semiregular}.
If a panel has a red piece that only connects to one neighboring panel then we retract it such that it is only adjacent to a single corner.
While a panel has more than one red piece we can fully remove any such red piece from the panel.

When four cells share a common point, each of them has a red corner at this common point, then we can color one of them blue, if it is the corner of a panel in a purple cell with another distinct red piece in it.
After repeated application of the above, any panel with multiple red components is in one of four cases:

	\begin{enumerate}
		\item There are two red corners $r_1$ and $r_2$ on the same side of the panel. The connecting path exists the current panel via the same side and enters either on the same or adjacent side. (see Fig.~\ref{fig:simplepainting1}(a)).
		\item There are two red corners $r_1$ and $r_2$ on the same side of the panel. The connecting path exits the panel via opposite sides of the panel (see Fig.~\ref{fig:simplepainting1}(b)). The blue piece connects only downwards in the panel below.
        \item There are two red corners $r_1$ and $r_2$ that do not share a common side of the panel. In this case the other corners are blue, otherwise one of the two previous cases applies (see Fig.~\ref{fig:simplepainting3}). Furthermore, either $p_1$ or $p_2$ is blue.
        \item There are two red corners $r_1$ and $r_2$ that do not share a common side of the panel. Furthermore, both $p_1$ and $p_2$ are red.
	\end{enumerate}
	
	We can reduce the complexity of each panel the following reduction rules:
	
	\begin{enumerate}
		\item Connect $r_1$ and $r_2$, and remove $r_3$ to break the red cycle (see Fig.~\ref{fig:simplepainting1}(a)).
		\item Connect $r_3$ and $r_4$, and connect $b_1$ and $b_2$ between $r_1$ and $r_3$.
                  We remove $r_1$ as it has become useless by connecting $b_1$ and $b_2$.
                  Similarly, we fully color the rest of the bottom panel red (see Fig.~\ref{fig:simplepainting1}(b)).
                  Note that the only option is for the blue piece in the bottom panel to connect solely to the panel below it. Assume that it also connected to the left panel. Then the left red piece must also connect to the left panel. When this red piece connects two different sides of this panel then it envelopes the blue corner and this panel cannot have anymore blue below. Contradiction. If the red piece ended in this panel, then it should include at least one corner, which it does not. Thus this situation can also not occur.
		\item If either $p_1$ or $p_2$ is blue, we connect $r_1$ and $r_2$ along this respective side of the panel (see Fig.~\ref{fig:simplepainting3}(a)). We connect either $b_1$ or $b_4$ to an adjacent blue polygon to break the red cycle and create a single blue polygon again.
		
\item When both $p_1$ and $p_2$ are red, we connect $r_1$ and $r_2$ along the side of the panel that is on the opposite side of the corner that the connection between both red pieces encloses (we pass $p_2$ in Fig.~\ref{fig:simplepainting3}(b)). This results in three blue polygons.
    Assume w.l.o.g. that the red corners are connected via a path leaving through the bottom and entering from the right.
    We join the blue polygons together by connecting $b_1$ and $b_2$, and $b_3$ and $b_4$, separating both sides of the connecting path.
We also recolor the border along the side of the adjacent panel that is not enclosed by the red connection ($b_5$ in this case) to ensure connectivity of the red.
	\end{enumerate}

Repeated application of the above reduction rules, interlaced with the reduction of the number of red pieces in a panel, must result in a 2-painting.
Every panel still has a single blue piece as no reduction rule increases the number of blue pieces in a panel.
Furthermore, as none of the above rules applies anymore, every panel has a single red polygon as well.
Thus, the painting is a 2-painting.
	\hfill$\qed$
\end{proof}

\end{document}